\documentclass[sn-mathphys-num]{sn-jnl}

\usepackage{amssymb}
\usepackage{amsthm}
\usepackage{graphicx}
\usepackage{amsmath}
\usepackage{amsfonts}
\usepackage{color}
\usepackage{centernot}
\usepackage{mathtools}
\usepackage{stmaryrd}

\theoremstyle{thmstyleone}%
\newtheorem{theorem}{Theorem}


\newcommand{\newtext}[1]{{#1}}

\begin{document}

\title{RG analysis of spontaneous stochasticity on a fractal lattice: stability and bifurcations}
\author{Alexei A. Mailybaev} 
\affil{Instituto de Matem\'atica Pura e Aplicada -- IMPA, Rio de Janeiro, Brazil. \\ Email: alexei@impa.br}

\abstract{
\newtext{In this paper, we study the stability and bifurcations of spontaneous stochasticity using an approach reminiscent of the Feigenbaum renormalization group (RG).}
We consider dynamical models on a self-similar space-time lattice as toy models for multiscale motion in hydrodynamic turbulence.
Here an ill-posed ideal system is regularized at small scales and the vanishing regularization (inviscid) limit is considered. 
By relating the inviscid limit to the dynamics of the RG operator acting on the flow maps, we explain the existence and universality (regularization independence) of the limiting solutions as a consequence of the fixed-point RG attractor.
Considering the local linearized dynamics, we show that the convergence to the inviscid limit is governed by the universal RG eigenmode. 
We also demonstrate that the RG attractor undergoes a period-doubling bifurcation with parameter variation, thereby changing the nature of the inviscid limit. 
In the case of chaotic RG dynamics, we introduce the stochastic RG operator acting on Markov kernels. 
Then the RG attractor becomes stochastic, which explains the existence and universality of spontaneously stochastic solutions in the limit of vanishing noise. 
We study a linearized structure (RG eigenmode) of the stochastic RG attractor and its period-doubling bifurcation.
Viewed as prototypes of Eulerian spontaneous stochasticity, our models explain its mechanism, universality and potential diversity.
}

\keywords{spontaneous stochasticity, renormalization group, inviscid limit, noise, turbulence}

\maketitle

\section{Introduction} 

Developed hydrodynamic turbulence is characterized by very large Reynolds numbers, which is mathematically equivalent to the inviscid limit of the incompressible Navier-Stokes equations~\cite{frisch1999turbulence}. 
It has long been predicted that turbulent solutions are extremely sensitive to tiny small-scale uncertainties such as microscopic noise~\cite{lorenz1969predictability,leith1972predictability,ruelle1979microscopic,eyink1996turbulence,palmer2014real,boffetta2017chaos}. 
As a consequence, the finite-time forecasting problem cannot be formulated without adding small noise terms to the Navier-Stokes equations.

With the aim of understanding this finite-time unpredictability, we focus on the initial value problem and its properties in the inviscid limit.
Recent studies of the Navier-Stokes system~\cite{thalabard2020butterfly} and its simplified (shell) models~\cite{mailybaev2016spontaneously,mailybaev2016spontaneous,mailybaev2017toward,biferale2018rayleigh,bandak2024spontaneous} have shown that solutions of the initial value problem converge in the limit where both viscosity and noise tend to zero.
However, this limit remains stochastic.
It represents a stochastic process solving a deterministically formulated initial value problem for the ideal (incompressible Euler) equations~\cite{eyink2024space}. 
This dichotomy of velocity fields that are stochastic and that solve a formally deterministic problem is called Eulerian spontaneous stochasticity.
Here the word "Eulerian" is used to distinguish it from the similar phenomenon of Lagrangian spontaneous stochasticity, which describes the stochasticity of trajectories in rough (H\"older continuous) vector fields~\cite{bernard1998slow,falkovich2001particles,kupiainen2003nondeterministic}. 
It was also noted that Eulerian spontaneous stochasticity is universal: the limiting stochastic process does not depend on the form of the (vanishing) dissipative and noise terms. 

This paper proposes a theory that describes the mechanism of spontaneous stochasticity, its universality and potential diversity.
For this purpose, we use a class of models that are spontaneously stochastic and convenient for both theoretical and numerical study.
As in shell models~\cite{biferale2003shell} we represent physical space as a discrete sequence of scales.
Using Poincar\'e's idea of replacing continuous time with discrete time, we evolve each scale of our model by its own turnover time.
Then the dynamics is reduced to functional relations on a fractal space-time lattice. 
This type of models was introduced in~\cite{mailybaev2023spontaneously}, where a rigorous example of a universal spontaneously stochastic system was presented. 
Here we consider a slight modification that takes into account energy balance. 

The inviscid limit on a fractal lattice is represented by a sequence of well-posed regularized systems converging to a self-similar and ill-posed ideal system. Following~\cite{mailybaev2023spontaneous} we define a renormalization group (RG) operator acting on the flow maps of regularized systems. 
Then the inviscid limit is represented as an attractor of the RG dynamics.
This approach has similarities with Feigenbaum's RG theory~\cite{feigenbaum1983universal}: reducing the viscous scale amounts to a double action (composition) of flow maps combined with the proper rescaling. \newtext{In this sense, our approach is different from RG analysis via small-scale coarse graining in the Kadanoff--Wilson spirit~\cite{wilson1983renormalization}; for the applications to turbulence we refer (among many others) to the early works~\cite{forster1977large,yakhot1986renormalization} and the non-perturbative theory~\cite{canet2022functional}. Our RG approach upgrades the dynamics at the largest scale, thereby, correlating with the ideas that the renormalization in turbulence can be implemented in the inverse fashion, from the side of large scales~\cite{eyink1994analogies,inverseRG}.}

In this paper, we use numerical simulations to test the dynamical properties of the RG operator, which are predicted using the classical theory of dynamical systems. First, we verify that the fixed-point RG attractor has a regular local structure governed by the leading eigenmode of the linearized operator. This implies that not only the inviscid limit but also the first-order corrections are universal, because these corrections are determined by the RG eigenvalue and eigenvector. We show that changing the parameter in the ideal model leads to a classical period-doubling bifurcation of the RG attractor when the eigenvalue passes the value $\rho = -1$. This leads to a qualitative change in the dynamics achieved in the inviscid limit. 

Spontaneous stochasticity appears when the RG dynamics is chaotic. In this case, we extend our approach by introducing a stochastic RG operator acting on Markov kernels for regularized systems with noise~\cite{mailybaev2023spontaneous}. Then the spontaneously stochastic limit corresponds to a fixed-point attractor of this stochastic RG operator. 
As a consequence, spontaneously stochastic solutions have properties similar to deterministic ones: the universality of spontaneously stochastic solutions follows from the existence of the RG attractor, and the universality of first-order corrections to their probability distributions follows from the leading RG eigenmode.
Finally, we demonstrate a period-doubling bifurcation of the spontaneously stochastic attractor. 
This bifurcation shows that spontaneous stochasticity can take different forms, and these forms are described by the developed RG theory.
In a parallel work~\cite{mailybaev2024rg}, we extended our RG approach to shell models of turbulence, demonstrating similar results in continuous-time systems. 

\newtext{We remark that spontaneous stochasticity appears in systems without uniqueness, though the latter property alone is not sufficient and does not explain the universality. Nonuniqueness of weak solutions~\cite{buckmaster2019nonuniqueness} fits naturally into the stochastic approach for hydrodynamic turbulence~\cite{vishik2012mathematical}, and justifies the use of Markov attractors~\cite{flandoli2008markov}. The stochastic RG dynamics presented in this paper emerges as a further development of such ideas. Here we do not pretend to give a mathematically rigorous theory but rather develop a consistent formalism that can be verified numerically. Still, due to simplicity of our model, some RG properties are rigorous and formulated as theorems.}

The paper is organized as follows. Section~\ref{sec2} introduces the model. Section~\ref{sec3} describes the RG approach. Section~\ref{sec4} studies the fixed-point RG attractor. Section~\ref{sec5} demonstrates the period-doubling bifurcation of the RG attractor. Section~\ref{sec6} develops the stochastic RG approach. Section~\ref{sec7} relates spontaneously stochastic solutions to the fixed-point attractor of the stochastic RG operator. Section~\ref{sec8} studies the period-doubling bifurcation of spontaneous stochasticity. 
Section~\ref{sec10} summarizes the findings of this work.
Finally, in the Appendix (Section~\ref{sec11}) we briefly discuss the turbulent steady state.

\section{Dynamics on a fractal lattice}
\label{sec2} 

We consider dynamical systems on a fractal space-time lattice shown in Fig.~\ref{fig2}. The spatial dimension of the lattice is represented by a geometric progression of scales $\ell_n = 2^{-n}$ with scale numbers $n = 0,1,2,\ldots$. Each scale has a corresponding turnover time $\tau_n$ to which we attribute the same value $\tau_n = \ell_n = 2^{-n}$. Using the turnover times, we define the discrete time sequence at each scale as $t/\tau_n = 0,1,2,\ldots$.
At each scale and corresponding discrete time we introduce a non-negative state variable $u_n(t) \ge 0$. 
Thus, the overall evolution is determined by the collection of states $\{u_n(t)\}$, where the pairs $(n,t)$ belong to the fractal lattice in Fig.~\ref{fig2}.

\begin{figure}[tp]
\centering
\includegraphics[width=0.9\textwidth]{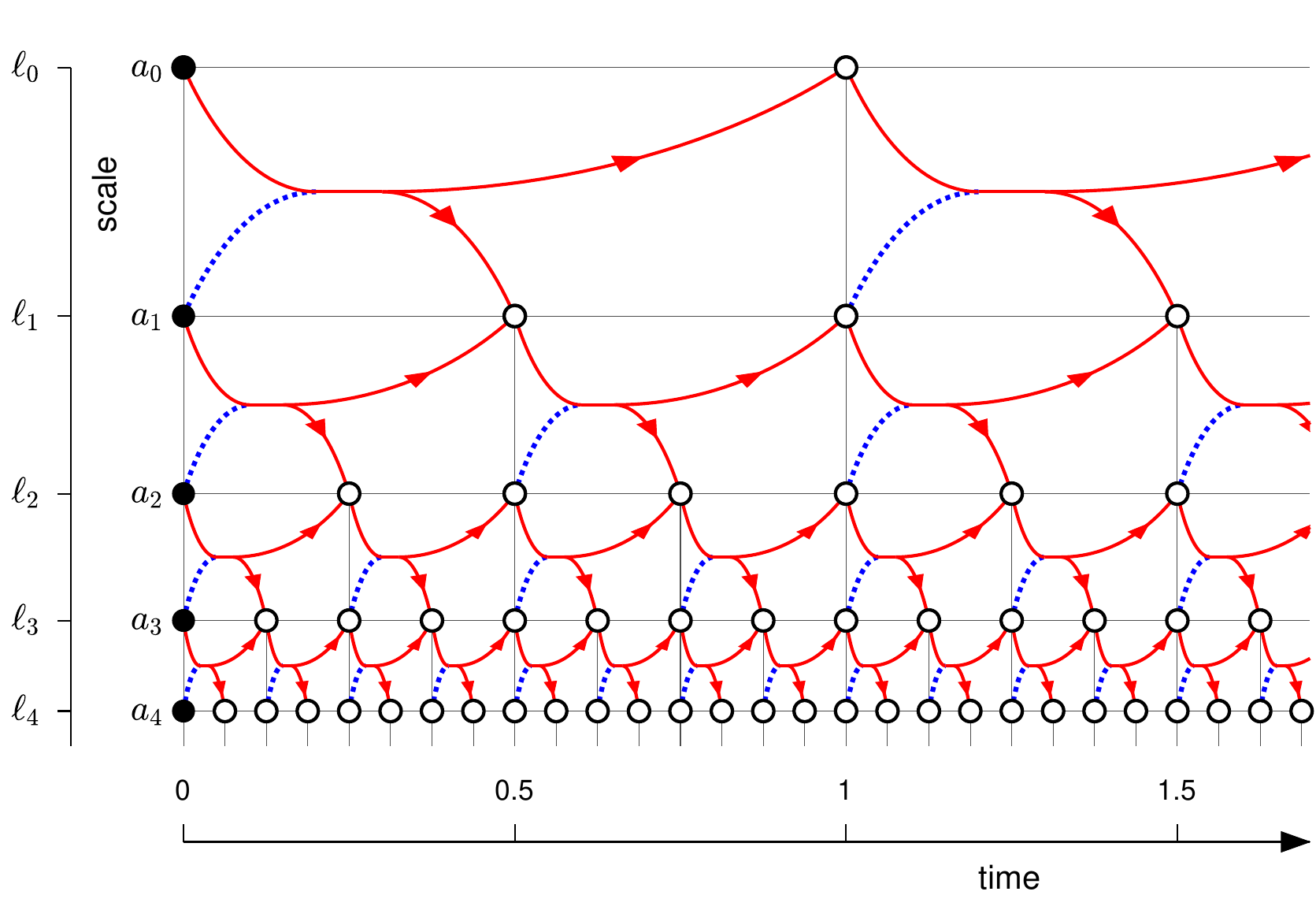}
\caption{Fractal space-time lattice. The initial conditions $u_n(0) = a_n$ are shown as black dots, and the unknown states $u_n(t)$ are shown as white dots. The arrows indicate the functional dependence of states at different times: red arrows correspond to energy transfer, and blue dotted lines indicate the influence of adjacent scales on energy transfer.}
\label{fig2}
\end{figure}

\begin{figure}[tp]
\centering
\includegraphics[width=0.7\textwidth]{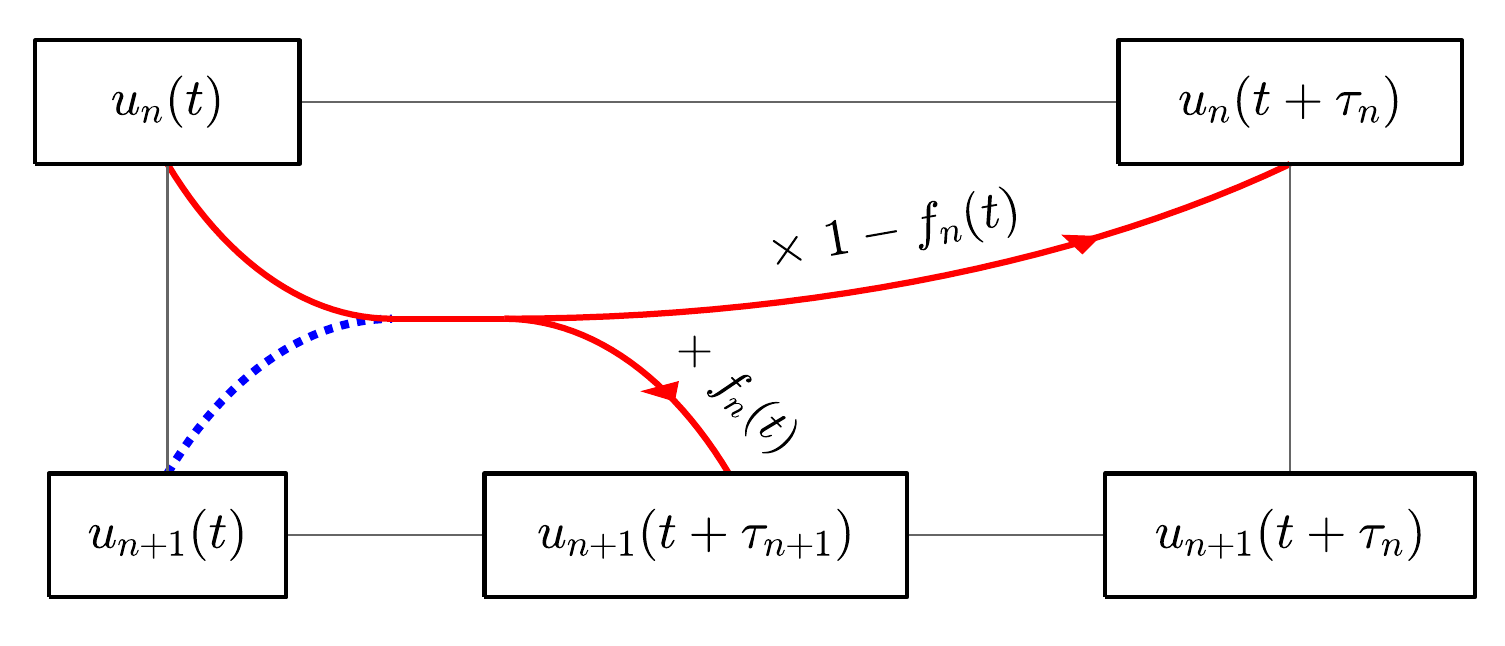}
\caption{Energy transfer scheme: the fraction $f_n(t)$ of $u_n(t)$ is transferred to the smaller scale $n+1$ at the next time $t+\tau_{n+1}$, while the fraction $1-f_n(t)$ remains in the same scale. The value of $f_n(t)$ depends on the states at two adjacent scales, $u_n(t)$ and $u_{n+1}(t)$.} 
\label{fig1}
\end{figure}

The dynamics of the fractal lattice is introduced by defining relations between states at different times. 
We interpret $u_n(t)$ as the energy \newtext{(or any other invariant)} at scale $n$ and consider the following class of conservative interactions.
The fraction $f_n(t)$ of energy $u_n(t)$ is transferred to a smaller scale at each turnover time, while the remaining fraction $1-f_n(t)$ stays at the same scale; see Fig.~\ref{fig1}. The transfer fraction is defined by the same function
	\begin{equation}
	\label{eq2_2b}
	f_n(t) = f\big(u_n(t),u_{n+1}(t)\big)
	\end{equation}
in terms of two adjacent states. Thus, the equations of motion are formulated as
	\begin{equation}
	\label{eq2_2}
	u_n(t+\tau_n) = \big[1-f_n(t)\big] u_n(t) +
	\chi_{\mathrm{even}}\left(\frac{t}{\tau_n}\right) 
	f_{n-1}(t) u_{n-1}(t)
	\end{equation}
at integer turnover times $t/\tau_n = 0,1,2,\ldots$.
The first term on the right-hand side of Eq.~(\ref{eq2_2}) represents the energy remained at scale $n$, and the second term is the energy transferred from scale $n-1$. The function $\chi_{\mathrm{even}}(m) = 1$ for even $m$ and zero otherwise. 

The central property of ideal dynamics (\ref{eq2_2}) is scale invariance: the relations are invariant under the transformation
	\begin{equation}
	\label{eq2_S}
	n \mapsto n+1, \quad t \mapsto t/2,
	\end{equation}
which correspond to the change of scales $\ell_n \mapsto \ell_{n+1}$ and turnover times $\tau_n \mapsto \tau_{n+1}$. This property imitates the space-time scale invariance of ideal hydrodynamics (Euler equations) \newtext{for the specific scaling exponent $h = 0$}~\cite{frisch1999turbulence}. 

The scale invariance is broken at $n = 0$, which corresponds to the largest scale $\ell_0$. Here the governing equations (\ref{eq2_2}) are understood as
	\begin{equation}
	\label{eq2_4}
	u_0(t+1) = \big[1-f_0(t)\big] u_0(t), \quad t = 0,1,\ldots,
	\end{equation}
where we assume that $u_{-1}(t) \equiv 0$; see Fig.~\ref{fig2}. In general, one can add a forcing term to Eq.~(\ref{eq2_4}). 
For simplicity, we do not consider forcing as it is irrelevant to our analysis. 
However, for a brief account of the forced dynamics, see the Appendix.

We consider initial conditions 
	\begin{equation}
	\label{eq2_5}
	u_n(0) = a_n \ge 0, \quad n = 0,1,\ldots
	\end{equation}
with the finite total energy $\sum_{n = 0}^\infty a_n < \infty$. We will discuss in the end of Section~\ref{sec4} that extra conditions on the decay of the sequence $\{a_n\}$ may be required.
Initial conditions (\ref{eq2_5}) with dynamics equations~(\ref{eq2_2}), (\ref{eq2_4}) and (\ref{eq2_2b}) define the ideal initial value problem.
\newtext{In general, its solutions are not unique, which makes the ideal initial value problem ill-posed. Indeed, try to trace the state $u_1(t)$ at time $t = 3\tau_1 = 1.5$ to its predecessors at previous times following red or blue lines in Fig.~\ref{fig2}. By doing so one diverges to infinitesimal scales $n \to \infty$ while staying at a finite distance from the initial time $t = 0$ in infinitely many ways. This gives an idea that $u_1(t)$ cannot be expressed in terms of initial conditions alone. The rigorous proof of nonuniqueness may be complicated, and we refer to \cite{mailybaev2023spontaneously} where the nonuniqueness was studied in detail for a similar model, and to \cite{mailybaev2023spontaneous} for the relation of this nonuniqueness with a finite-time blowup.}

The well-posed formulation is obtained using regularization.
By saying that the system is regularized at a given scale $N$, we mean that the dynamics is modified at the scales $n \ge N$ such that the  solution of the initial value problem exists and is unique. Then the ideal system is recovered in the limit $N \to \infty$. 
We call $N$ the viscous scale and $N \to \infty$ the inviscid limit. 
Our goal is to study how regularized solutions behave in the inviscid limit depending on the choice of regularization.

For our analysis we introduce the following family of regularizations, which depend on two parameters: the viscous scale $N \ge 0$ and a real parameter $0 < \alpha \le 1$.
For fixed $N$ and $\alpha$  the regularized system is defined as follows.
At scales $n \le N$ we keep Eqs.~(\ref{eq2_2}) and  (\ref{eq2_4}) with a modified transfer function
	\begin{equation}
	\label{eq2_7}
	f_n(t) = \left\{ \begin{array}{ll}
	f\big(u_n(t),u_{n+1}(t)\big), & n < N; \\
	\alpha, & n = N.
	\end{array}\right.
	\end{equation}
For scales $n > N$ we impose the trivial dynamics as
	\begin{equation}
	\label{eq2_6}
	u_n(t) = 0, \quad n > N, \ t > 0.
	\end{equation}
The regularized equations coincide with the equations of the ideal system for $n < N$. At the viscous scale $N$, energy dissipates at a constant rate (constant fraction per turnover time) $\alpha$, and then is instantly removed from scales $n > N$.
We regard this regularization as an imitation of the dissipative (viscous) mechanism in turbulence~\cite{frisch1999turbulence}.
At all scales $n > 0$ (away from the large-scale cutoff), our regularized equations have the scaling symmetry 
	\begin{equation}
	\label{eq2_ES}
	n \mapsto n+1, \quad t \mapsto t/2, \quad N \mapsto N+1,
	\end{equation}
which extends the ideal scale invariance (\ref{eq2_S}).  

\section{Inviscid limit as RG dynamics}
\label{sec3}

Let us introduce compact notations $a = (a_0,a_1,\ldots)$ for initial conditions and similarly $u(t) = \big(u_0(t),u_1(t),u_2(t),\ldots\big)$ for the states at integer times $t = 0,1,2,\ldots$.
We assume that both $a$ and $u(t)$ belong to the phase space denoted as $\mathbb{U}$.
For now, $\mathbb{U}$ can be viewed as the space of summable sequences with non-negative components, but additional regularity conditions may be imposed later, as discussed at the end of Section~\ref{sec4}. Recall that the whole field (defined at all points of the lattice) is denoted as $\{u_n(t)\}$.

We define the flow map $\phi^{(N,\alpha)}: \mathbb{U} \mapsto \mathbb{U}$ of the regularized system by the relation
	\begin{equation}
	\label{eq3_1}
	\phi^{(N,\alpha)}(a) = u(1).
	\end{equation}
Here $u(1)$ is the solution of the regularized problem with parameters $(N,\alpha)$ and initial state $a$, evaluated at the large-scale turnover time $\tau_0 = 1$. In particular, for $N = 0$ one obtains 
	\begin{equation}
	\label{eq3_1x}
	\phi^{(0,\alpha)}(a) = \big((1-\alpha) a_0,0,0,\ldots\big).
	\end{equation}

Let us define the shift maps and the large-scale projector
	\begin{equation}
	\label{eq3_2}
	\sigma_+(a) = (a_1,a_2,\ldots), \quad
	\sigma_-(a) = (0,a_0,a_1,\ldots), \quad
	\pi_0(a) = (a_0,0,0,\ldots),
	\end{equation}
and the energy transfer map
	\begin{equation}
	\label{eq3_3}
	\xi(a) = (f(a_0,a_1)a_0,0,0,\ldots).
	\end{equation}
Then, the flow maps for different viscous scales $N$ are explicitly related by the following

\begin{theorem}
\label{th1}
For any $N \ge 0$, the following relation holds
	\begin{equation}
	\label{eq3_4}
	\phi^{(N+1,\alpha)} = \mathcal{R}[\phi^{(N,\alpha)}],
	\end{equation}
where the operator $\mathcal{R}$ acting on maps $\phi: \mathbb{U} \mapsto \mathbb{U}$ is defined as
	\begin{equation}
	\label{eq3_5}
	\mathcal{R} [\phi] = \pi_0-\xi+\sigma_- \circ \phi \circ (\xi+\phi \circ \sigma_+).
	\end{equation}
This operator is called the renormalization group (RG) operator.
\end{theorem}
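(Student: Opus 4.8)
The plan is to fix $N \ge 0$ and an arbitrary initial condition $a \in \mathbb{U}$, run the $(N{+}1,\alpha)$-regularized system up to the large-scale turnover time $\tau_0 = 1$, and rewrite the resulting state $u(1) = \phi^{(N+1,\alpha)}(a)$ in terms of two applications of $\phi^{(N,\alpha)}$ by exploiting the extended scale invariance~(\ref{eq2_ES}). The key observation is that the subsystem living on scales $n \ge 1$ of the $(N{+}1,\alpha)$-regularized system, reindexed by $m = n-1$ and run in the rescaled time $s = 2t$, obeys \emph{verbatim} the equations of the $(N,\alpha)$-regularized system, with $u_{-1}\equiv 0$ playing the role of the (now absent) large scale. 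Over $t\in[0,1]$ this rescaled subsystem runs for two of its own turnover times, $s\in[0,2]$, which is exactly why $\phi^{(N,\alpha)}$ appears twice.

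First I would handle scale $n=0$ directly: since $u_{-1}\equiv 0$ and $f_0(0) = f(a_0,a_1)$, Eq.~(\ref{eq2_4}) gives $u_0(1) = [1-f(a_0,a_1)]a_0$, which is precisely the zeroth component of $\pi_0(a) - \xi(a)$; the remaining terms of $\mathcal{R}[\phi^{(N,\alpha)}]$ carry a leading $\sigma_-$ and hence do not touch component $0$. Next I would track the energy injected by scale $0$ into scale $1$. Since $u_0$ is defined only at integer times and transfers to scale $1$ occur at times $t$ with $t/\tau_1$ even, the only injection relevant for the state at $t=1$ is the one at $t=0$; it contributes $f_0(0)u_0(0) = f(a_0,a_1)a_0 = [\xi(a)]_0$ to $u_1$ at time $\tau_1$, and in Eq.~(\ref{eq2_2}) it is added \emph{after} the dilution factor $1-f_1(0)$, i.e. after one rescaled turnover time. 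Hence running the rescaled subsystem from $\sigma_+(a) = (a_1,a_2,\dots)$ for one turnover time and then adding $\xi(a)$ to the leading component yields the true state of scales $n\ge 1$ at time $t=\tau_1$; symbolically, this intermediate state is $b := \xi(a) + \phi^{(N,\alpha)}(\sigma_+(a))$.

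Then I would continue from $t=\tau_1$ to $t=1$. On this second half-interval no further injection from scale $0$ reaches scale $1$ before $t=1$, so scales $n\ge 1$ evolve autonomously, and using that the regularized equations are invariant under integer time shifts (the only time-dependence is through $\chi_{\mathrm{even}}$, which has an even period at scales $\ge 1$ and otherwise multiplies the vanishing $u_{-1}$), the corresponding flow is again $\phi^{(N,\alpha)}$, now applied to $b$. Therefore $(u_1(1),u_2(1),\dots) = \phi^{(N,\alpha)}(b) = \phi^{(N,\alpha)}\big(\xi(a) + \phi^{(N,\alpha)}(\sigma_+(a))\big)$; prepending a zero via $\sigma_-$ and combining with $u_0(1)$ reproduces exactly $\mathcal{R}[\phi^{(N,\alpha)}](a)$ as in~(\ref{eq3_5}). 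As consistency checks I would also verify the trivial tail $u_n(1)=0$ for $n \ge N{+}2$ and the base case $N=0$ against~(\ref{eq3_1x}).

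The main obstacle is the correct bookkeeping of the scale-$0\to 1$ transfer: one must confirm that within $[0,1]$ exactly one injection matters, that it is felt only at time $\tau_1$ (not as an initial datum at $t=0$, nor later), and that it must be inserted \emph{between} the two applications of $\phi^{(N,\alpha)}$. A secondary technical point is justifying that the rescaled subsystem on scales $n\ge 1$ is literally the $(N,\alpha)$-regularized system — this is where the extended symmetry~(\ref{eq2_ES}) enters, including the matching of the cutoff ($f_{N+1}=\alpha$ becoming $f_N=\alpha$) and of the truncation ($u_n=0$ for $n>N{+}1$ becoming $u_m=0$ for $m>N$) — together with the time-shift invariance that makes its integer-time flow map independent of the starting instant.
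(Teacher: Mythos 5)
Your proposal is correct and follows essentially the same route as the paper: the intermediate state $b=\xi(a)+\phi^{(N,\alpha)}(\sigma_+(a))$ at $t=\tau_1$ is exactly the paper's $u'$ in Eq.~(\ref{eq3_6}), and your final assembly $\pi_0(a)-\xi(a)+\sigma_-\circ\phi^{(N,\alpha)}(b)$ is Eq.~(\ref{eq3_7}). The only difference is that you make explicit (via the scaling symmetry~(\ref{eq2_ES}), the parity bookkeeping of $\chi_{\mathrm{even}}$, and integer-time-shift invariance of the subsystem on scales $n\ge 1$) what the paper justifies by inspection of Fig.~\ref{fig3}.
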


\begin{proof}
The RG relations (\ref{eq3_4}) and (\ref{eq3_5}) are the consequences of the self-similarity of the ideal system. 
This can be seen in Fig.~\ref{fig3}, where the larger (red) rectangle denotes a regularized system with viscous scale $N+1$. Then one can see that the systems bounded by the two smaller (green) rectangles are identical to the regularized systems with viscous scale $N$. In particular, denoting the intermediate state $u' = (u_1(t'),u_2(t'),\ldots)$ at time $t' = 1/2$, one finds
	\begin{equation}
	\label{eq3_6}
	u' = \xi(a)+\phi^{(N,\alpha)}(a'), \quad a' = \sigma_+(a) = (a_1,a_2,\ldots).
	\end{equation}
Here the second term $\phi^{(N,\alpha)}(a')$ corresponds to the left green rectangle in Fig.~\ref{fig3}, and first term $\xi(a)$ accounts for the energy transfer from the scale $n = 0$ to $1$. Similarly, using the flow map $\phi^{(N,\alpha)}$ for the right green rectangle, one finds
	\begin{equation}
	\label{eq3_7}
	\phi^{(N+1,\alpha)}(a) = \pi_0(a)-\xi(a)+\sigma_- \circ \phi^{(N,\alpha)}(u'),
	\end{equation}
where $\pi_0(a)-\xi(a)$ specifies the energy remained at scale $n = 0$; see Eq.~(\ref{eq2_2}) and Fig.~\ref{fig1}.
Combining the identities (\ref{eq3_6}) and (\ref{eq3_7}) yields the RG relations (\ref{eq3_4}) and (\ref{eq3_5}).
\end{proof}

\begin{figure}[tp]
\centering
\includegraphics[width=0.7\textwidth]{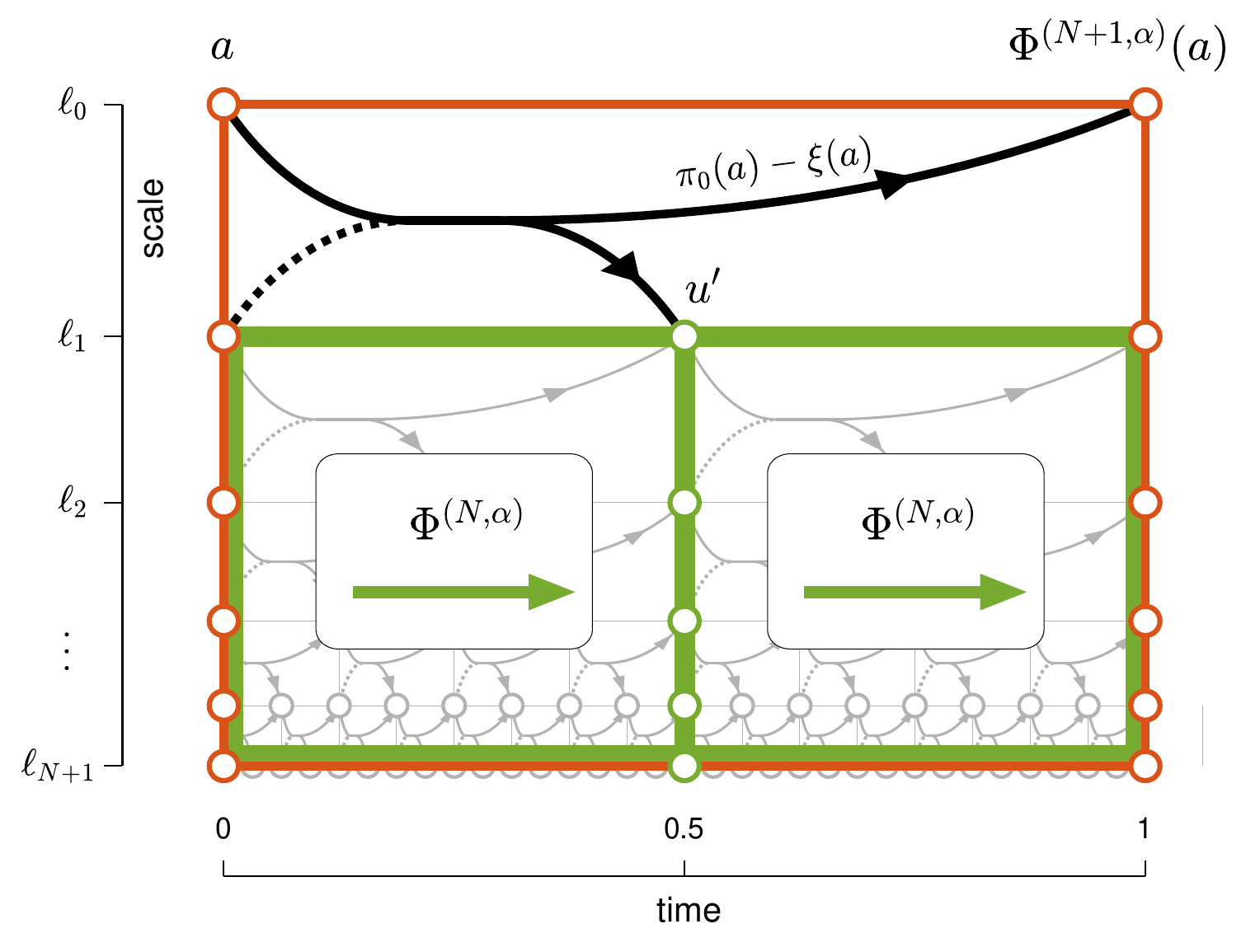}
\caption{Schematic representation of the RG operator transforming $\phi^{(N,\alpha)}$ to $\phi^{(N+1,\alpha)}$.} 
\label{fig3}
\end{figure}

We see that the increase of the viscous scale $N$ is equivalent to the RG dynamics (\ref{eq3_4}) starting from the initial flow map (\ref{eq3_1x}). 
\newtext{These flow maps, however, solve the regularized problems at the particular time $t = 1$ only. We now show that solutions at other times are obtained as a proper composition of flow maps with different $N$. This reduces solving the regularized initial value problem to determining the flow maps.}

\begin{theorem}
\label{th2}
Let $\{u_n(t)\}$ be the solution of the regularized initial value problem for the initial condition $a$ and parameters $(N,\alpha)$. 
Consider time $t = m+\tau_{i_1}+\ldots+\tau_{i_k}$ for some integers $m \ge 0$ and $0 = i_0 < i_1 < \cdots < i_k \le N$.
Then, all solution components at this time take the form
	\begin{equation}
	\label{eq3_GS}
	\big(u_{i_k}(t),u_{i_k+1}(t),u_{i_k+2}(t),\ldots\big) = 
	\Psi_k \circ \cdots \circ \Psi_1 \circ
	\underbrace{\phi^{(N,\alpha)}\circ \cdots \circ \phi^{(N,\alpha)}}_{m\,\,\mathrm{times}}(a),
	\end{equation}
where $\Psi_j = \xi \circ \sigma_+^{\Delta i_j-1}+\phi^{(N_j,\alpha)} \circ \sigma_+^{\Delta i_j}$ with $N_j = N-i_j$ and $\Delta i_j = i_j-i_{j-1}$.
\end{theorem}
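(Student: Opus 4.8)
The plan is to reduce~(\ref{eq3_GS}) to iterated applications of a single-step lemma, using exactly the self-similarity behind Theorem~\ref{th1}. First I would dispose of the integer part $m$. Since the large-scale equation~(\ref{eq2_4}) is autonomous in the discrete time $t = 0,1,2,\ldots$ and at every integer time $m$ the sub-step counter $m/\tau_n = m2^n$ is even for all $n \ge 1$ (and scale $0$ carries no incoming term), the evolution started from the state $u(m)$ is a time-translate of the one started from $u(0) = a$. Hence $\underbrace{\phi^{(N,\alpha)}\circ\cdots\circ\phi^{(N,\alpha)}}_{m}(a) = u(m)$, and it suffices to prove the identity for $m = 0$ with $a$ replaced by this state.

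Next I would isolate the elementary step as a lemma: for a regularized system with initial condition $b$, viscous scale $M$, parameter $\alpha$ and any integer $1 \le p \le M$, one has $\bigl(u_p(\tau_p),u_{p+1}(\tau_p),\ldots\bigr) = \xi\bigl(\sigma_+^{\,p-1}(b)\bigr) + \phi^{(M-p,\alpha)}\bigl(\sigma_+^{\,p}(b)\bigr)$. Its proof parallels that of Theorem~\ref{th1}: over $[0,\tau_p]$ the scale $p$ performs exactly one update, at $t=0$, receiving $\chi_{\mathrm{even}}(0)f_{p-1}(0)b_{p-1} = f(b_{p-1},b_p)b_{p-1}$ from above, while scales $n > p$ evolve over this interval without feeling that incoming energy, since scale $p$ delivers its updated value to scale $p+1$ only at times $\ge \tau_p$. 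Relabelling $p \mapsto 0$ by the scaling symmetry~(\ref{eq2_ES}), the subsystem of scales $\ge p$ --- with its bottom scale playing the role of the large-scale cutoff~(\ref{eq2_4}) --- is a fresh regularized system with initial condition $\sigma_+^{\,p}(b)$ and viscous scale $M-p$, so its state after one turnover of that bottom scale, i.e.\ at the original time $\tau_p$, is $\phi^{(M-p,\alpha)}(\sigma_+^{\,p}(b))$; restoring the single incoming contribution $f(b_{p-1},b_p)b_{p-1} = \bigl[\xi(\sigma_+^{\,p-1}(b))\bigr]_0$ to the bottom component gives the lemma (for $p=M$ use the explicit formula~(\ref{eq3_1x}); and the flow maps ignore components beyond their viscous scale, so extra entries of $b$ are harmless).

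The theorem then follows by induction on $k$. Set $s_j = m + \tau_{i_1}+\cdots+\tau_{i_j}$ and $c_j = \Psi_j\circ\cdots\circ\Psi_1\circ(\phi^{(N,\alpha)})^{m}(a)$, and claim $c_j = \bigl(u_{i_j}(s_j),u_{i_j+1}(s_j),\ldots\bigr)$; the case $j=0$ is the reduction above. For the step $j-1 \mapsto j$, note that on $[s_{j-1},s_j]$ the scale $i_{j-1}$ is frozen, since $s_{j-1}$ is an integer multiple of $\tau_{i_{j-1}}$ while $\tau_{i_j} < \tau_{i_{j-1}}$, and that $s_{j-1}/\tau_n = m2^n + \sum_{l<j}2^{\,n-i_l}$ is even for every $n \ge i_{j-1}+1$, each summand being divisible by $2$. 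Hence the subsystem of scales $\ge i_{j-1}$, relabelled via~(\ref{eq2_ES}), evolves on this interval exactly as a fresh regularized system with initial condition $c_{j-1}$, viscous scale $N-i_{j-1}$ and parameter $\alpha$, with relabelled elapsed time $2^{\,i_{j-1}}\tau_{i_j} = \tau_{\Delta i_j}$. Applying the lemma with $M = N-i_{j-1}$ and $p = \Delta i_j$ (admissible because $i_{j-1} < i_j \le N$) gives $\xi\bigl(\sigma_+^{\,\Delta i_j-1}(c_{j-1})\bigr) + \phi^{(N-i_j,\alpha)}\bigl(\sigma_+^{\,\Delta i_j}(c_{j-1})\bigr) = \Psi_j(c_{j-1}) = c_j$ as the state at original scales $\ge i_j$ at time $s_j$; taking $j=k$ is precisely~(\ref{eq3_GS}).

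The main obstacle is bookkeeping, not ideas. One must keep precise track of which scales even carry a defined state at a fractional time (at the final time only $n \ge i_k$, since $t/\tau_n \notin \mathbb{Z}$ for $n < i_k$), of the parity of each sub-step counter $s_j/\tau_n$ at the intermediate dyadic times, and of the ``no backflow within one turnover time'' fact underlying the lemma; these arithmetic points should be nailed down before each appeal to self-similarity, since an off-by-one or a misread parity is the natural way the argument could go wrong.
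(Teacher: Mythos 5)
Your proof is correct and follows essentially the same route as the paper: the paper disposes of the integer part $m$ in the same way and then establishes the $\Psi_j$ steps by working the example $m=0$, $i_1=1$, $i_2=2$ (Fig.~\ref{fig4}), leaving the general case ``by analogy,'' which is exactly the single-step lemma plus induction on $k$ that you carry out explicitly. Your version merely makes rigorous the parity and timing bookkeeping that the paper leaves implicit, and I find no gaps in it.
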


\begin{proof}
The last $m$ terms in Eq.~(\ref{eq3_GS}) determine the evolution from $a = u(0)$ to $u(m)$ over the integer time $m$. Thus, it remains to understand the $\Psi_j$ terms. 
Figure~\ref{fig4} shows an example for the special case $ m = 0$, $i_1 = 1$ and $i_2 = 2$ corresponding to time $t = 3/4$. 
Let us give a proof for this case, from which the general proof can be deduced by analogy.
In this case Eq.~(\ref{eq3_GS}) becomes
	\begin{equation}
	\label{eq3_GSex}
	\Big(u_2\big(\textstyle\frac{3}{4}\big),u_3\big(\textstyle\frac{3}{4}\big),u_4\big(\textstyle\frac{3}{4}\big),\ldots\Big) = 
	\Psi_2 \circ \Psi_1(a)
	\end{equation}
with 
	\begin{equation}
	\label{eq3_GSpsi}
	\Psi_1 = \xi+\phi^{(N-1,\alpha)} \circ \sigma_+, \quad
	\Psi_2 = \xi+\phi^{(N-2,\alpha)} \circ \sigma_+.
	\end{equation}
From Fig.~\ref{fig4} one can see that the operators (\ref{eq3_GSpsi}) define the evolutions over the time intervals $\tau_1 = 1/2$ and $\tau_2 = 1/4$, respectively. 
Note that the first term $\xi$ in the operators (\ref{eq3_GSpsi}) correspond to the energy transfer from the upper scales. The second terms describe the remaining transfers, which have the same form as in systems regularized at the scales $N-1$ and $N-2$.
Hence, the composition of $\Psi_1$ and $\Psi_2$ yields the state at $t = 3/4$.
\end{proof}

\begin{figure}[tp]
\centering
\includegraphics[width=0.7\textwidth]{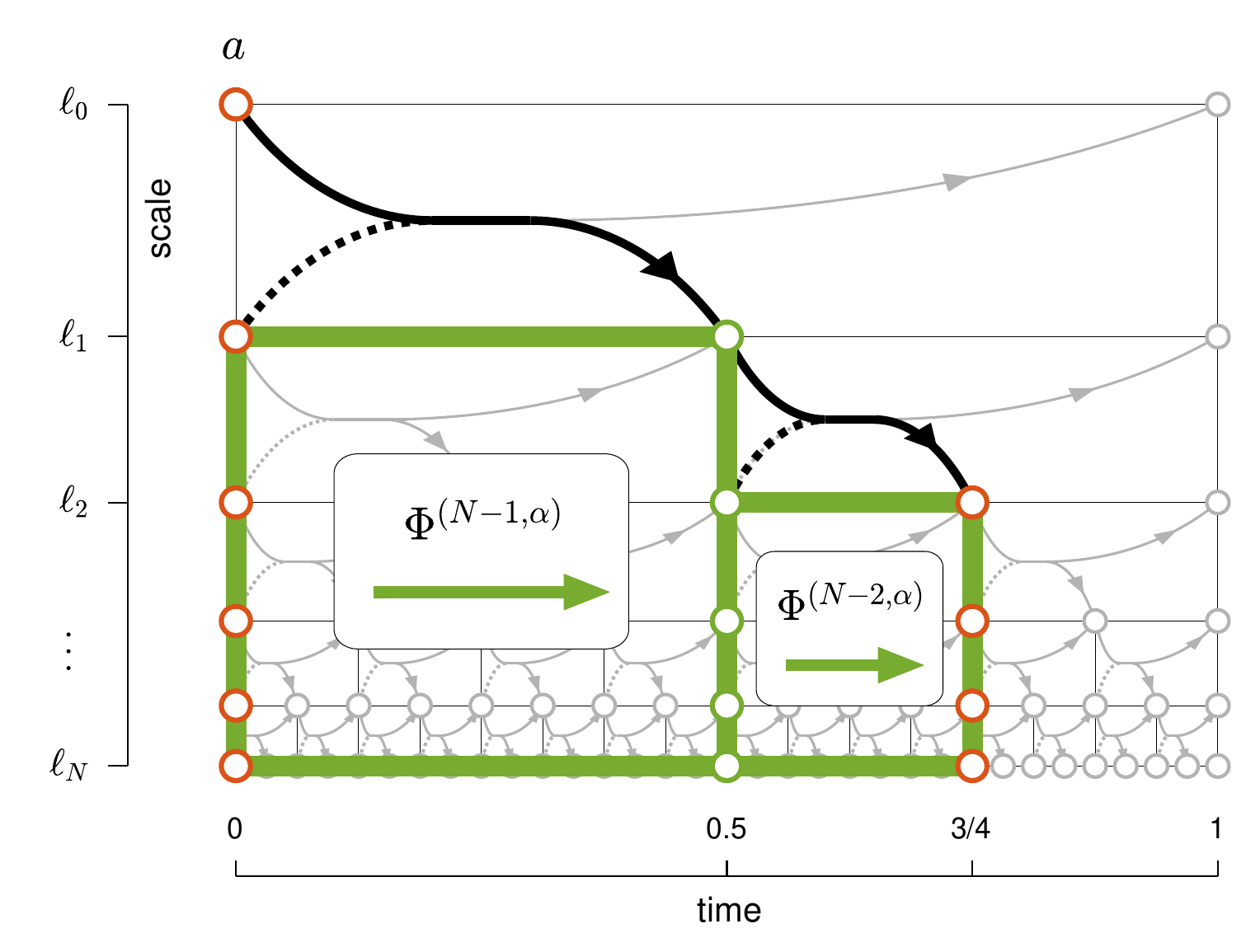}
\caption{Schematic representation of the relations (\ref{eq3_GSex}) and (\ref{eq3_GSpsi}).} 
\label{fig4}
\end{figure}

Theorem~\ref{th1} suggests a natural generalization for the inviscid limit of regularized dynamics. 
Let $\phi^{(0)}$ be an arbitrary flow map representing the system regularized at scale $N = 0$. 
We then introduce solutions regularized at scale $N$ using the flow maps
	\begin{equation}
	\label{eq2_RGG}
	\phi^{(N)} = \mathcal{R}^N[\phi^{(0)}]. 
	\end{equation}
Given an initial condition $a$, the regularized solution in unit time is defined as $u(1) = \phi^{(N)}(a)$.
Then the regularized solution $\{u_n(t)\}$ on the whole lattice is determined by the relations of Theorem~\ref{th2}, where $\alpha$ is dropped from the superscripts.
By construction, these solutions satisfy the initial conditions (\ref{eq2_5}) and the ideal equations (\ref{eq2_2}) for $n < N$, while the dynamics at scales $n \ge N$ is governed by the relations induced by the initial map $\phi^{(0)}$. 
We do not specify exactly which flow maps $\phi^{(0)}$ are admissible for (the domain of) the RG operator.
Such a rigorous construction is beyond the scope of this paper, where we only formulate assumptions supported by a collection of specific examples.
  
In summary, the RG operator (\ref{eq3_5}) relates the inviscid limit with the RG dynamics in the space of flow maps. 
In this scenario, the choice of the regularization translates into the initial flow map $\phi^{(0)}$, while the RG operator is defined uniquely in terms of the ideal transfer function $f$. This distinction of the regularization and ideal dynamics in the RG formalism is essential: it suggests that the inviscid limit is controlled by RG attractors, which are determined by the ideal system. Then, the choice of regularization selects one of these attractors by setting the initial flow map $\phi^{(0)}$ in a respective basin of attraction. 

Finally, we notice a similarity of our RG operator (\ref{eq3_5}) with the Feigenbaum equation~\cite{feigenbaum1983universal}. 
The latter involves the composition of the map with itself, combined with scaling. 
In our system, this double composition arises from the doubling of the turnover time between the $N$ and $N+1$ viscous scales.

\section{Fixed-point RG attractor}
\label{sec4}

In this section we formulate the concept of a fixed-point attractor in RG dynamics (\ref{eq2_RGG}) using an analogy with asymptotically stable equilibria in dynamical systems.
From a mathematical point of view, this analogy is not straightforward, since the RG operator acts in an infinite-dimensional space of flow maps. 
Our goal here is not to establish a rigorous theory, but to provide convincing numerical evidence that the RG dynamics follows the classical scenario of dynamical systems, and to draw conclusions from this for the inviscid limit. \newtext{We recall that the goal of our theory is understanding the finite-time predictability in the inviscid limit $N \to \infty$. This is different from the study of long-time statistics in turbulence, which refers to the limit $t \to \infty$.}

\subsection{Attractor and linearized RG dynamics}

The fixed-point RG attractor $\phi^\infty$ is the flow map obtained in the limit
	\begin{equation}
	\label{eq4_lim}
	\phi^\infty = \lim_{N \to \infty}\phi^{(N)} = \lim_{N \to \infty}\mathcal{R}^N[\phi^{(0)}], \quad
	\phi^{(0)} \in \mathcal{B}(\phi^\infty),
	\end{equation}
where $\mathcal{B}(\phi^\infty)$ denotes a basin of attraction in the space of flow maps.
As pointed out in~\cite{mailybaev2023spontaneous}, it is convenient to understand the limit in Eq.~(\ref{eq4_lim}) as the convergence $\phi^{(N)}(a^{(N)}) \to \phi^{\infty}(a)$ for any converging sequence $a^{(N)} \to a$. 
Here the convergence in the phase space is considered in the weak (product topology) sense, i.e., $a^{(N)} \to a$ means that the components $a_n^{(N)} \to a_n$ converge at every scale $n$.
According to this definition, the inviscid limit commutes with the composition of flow maps in the formulation of the RG operator (\ref{eq3_5}).
Therefore, taking the limit on both sides of the relation $\phi^{(N+1)} = \mathcal{R}[\phi^{(N)}]$, we find that the attractor $\phi^\infty$ is the fixed-point of the RG operator:
	\begin{equation}
	\label{eq4_FP}
	\phi^\infty =\mathcal{R}[\phi^\infty].
	\end{equation}
Moreover, the regularized solutions $\{u_n(t)\}$, which are defined in terms of the flow maps $\phi^{(N)}$ by the relations of Theorem~\ref{th2}, converge pointwise to solutions of the ideal initial value problem \cite{mailybaev2023spontaneous}. Thus, the convergence in Eq.~(\ref{eq4_lim}) predicts that the inviscid limit of regularized solutions exists and it is universal for all regularizations from the basin of attraction, $\phi^{(0)} \in \mathcal{B}(\phi^\infty)$. 

We further consider the linearization of the RG operator about the fixed-point $\phi^\infty$ as
	\begin{equation}
	\label{eq4_RGL1}
	\mathcal{R}(\phi^\infty+\delta\phi) 
	\approx \phi^\infty+\delta \mathcal{R}^\infty[\delta\phi].
	\end{equation}
Here $\delta \mathcal{R}^\infty[\delta\phi]$ denotes the variational derivative at the fixed point $\phi^\infty$, the explicit expression of which follows from the definition (\ref{eq3_5}) as
	\begin{equation}
	\label{eq4_RGL2}
	\delta \mathcal{R}^\infty[\delta\phi] 
	= \sigma_- \circ \delta\phi \circ (\xi+\phi^\infty \circ \sigma_+)+
	\sigma_- \circ \delta\phi^\infty \left(\xi+\phi^\infty \circ \sigma_+;\,\delta \phi \circ \sigma_+\right),
	\end{equation}
assuming that the map $\phi^\infty(a)$ has a variational derivative $\delta\phi^\infty(a;\delta a)$ at any state $a$. 

The convergence (\ref{eq4_lim}) of the RG dynamics (\ref{eq2_RGG}) together with the variation (\ref{eq4_RGL1}) yield a linearized formulation in the neighborhood of the attractor $\phi^\infty$ as
	\begin{equation}
	\label{eq4_LR}
	\delta\phi^{(N+1)} \approx \delta \mathcal{R}^\infty[\delta\phi^{(N)}], \quad 
	\delta\phi^{(N)} = \phi^{(N)}-\phi^\infty,
	\end{equation}
which is valid asymptotically for large $N$. As in the classical stability theory~\cite{arnold1992ordinary}, we assume that the linearized RG operator has a leading (largest absolute value) eigenvalue $\rho$. This eigenvalue together with the corresponding eigenvector (map) $\psi:\mathbb{U} \mapsto \mathbb{U}$ satisfy the eigenvalue problem 
	\begin{equation}
	\label{eq4_EV}
	\delta \mathcal{R}^\infty[\psi] = \rho \psi.
	\end{equation}
Then, the asymptotic form of the deviations (\ref{eq4_LR}) for large $N$ takes the form
	\begin{equation}
	\label{eq4_D}
	\delta\phi^{(N)} \approx c \rho^N \psi,
	\end{equation}
where $c$ is a real coefficient depending on the initial flow map $\phi^{(0)}$. Expression (\ref{eq4_D}) is formulated for a real eigenmode.
If the eigenmode (including $\rho$, $\psi$ and $c$) is complex, then the real part should be taken on the right-hand side of Eq.~(\ref{eq4_D}). 

The asymptotic relation (\ref{eq4_D}) implies a more subtle universality of the inviscid limit. Indeed, both the eigenvalue $\rho$ and the eigenvector $\psi$ are defined by the linearized RG operator and, therefore, do not depend on the choice of regularization. The choice of regularization, i.e. the choice of the initial map $\phi^{(0)}$, affects only the coefficient $c$ in the asymptotic relation (\ref{eq4_D}).

\subsection{Numerical example}

We now verify the described fixed-point scenario numerically in a specific example. Let us take the regularized transfer functions (\ref{eq2_7}) with
	\begin{equation}
	\label{eq4_1}
	f(u,v) = \left\{ \begin{array}{ll} 
	0.3-0.1 \cos \left(5 e^{-u/v} \right), & v > 0; \\
	0.2, & v = 0.
	\end{array} \right.
	\end{equation}
We solve the corresponding regularized system numerically and compute the flow maps $\phi^{(N,\alpha)}(a)$ for the specific initial condition
	\begin{equation}
	\label{eq4_1IC}
	a_n = \left\{\begin{array}{ll}
	1-n/5, & n = 0,\ldots,4; \\ 0, & n \ge 5.
	\end{array}\right.
	\end{equation}
Figure~\ref{fig5}(a) shows that the regularized states $\phi^{(N,\alpha)}(a)$ converge to the same inviscid limit $\phi^\infty(a)$ for large $N = 12,\ldots,15$ and two different parameters $\alpha = 1/4$ and $3/4$; these states are visually undistinguishable in the figure.

\begin{figure}[tp]
\centering
\includegraphics[width=0.9\textwidth]{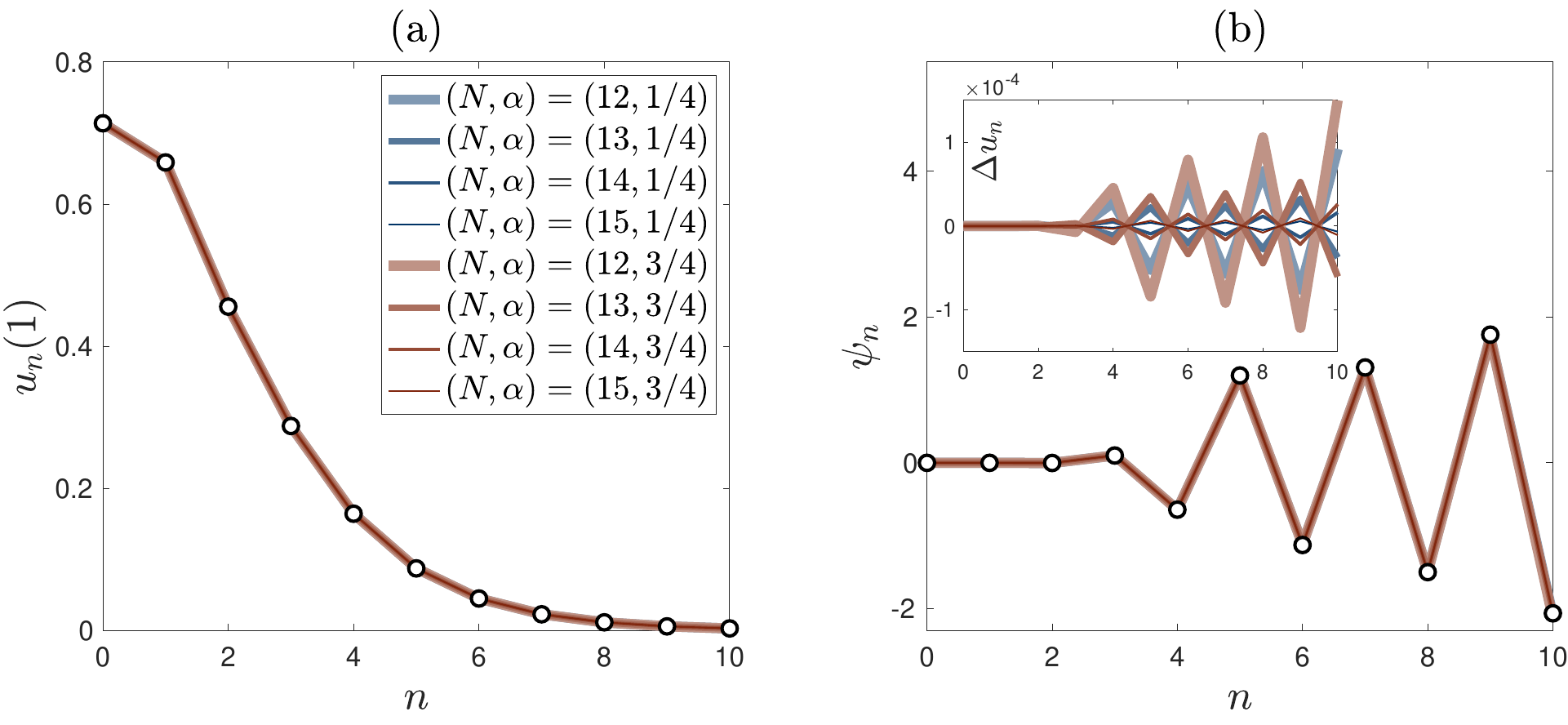}
\caption{(a) Convergence (very accurate collapse) of the states $u(1) = \phi^{(N,\alpha)}(a)$ for large $N = 12,\ldots,15$ and regularization parameters $\alpha = 1/4$ and $3/4$. 
(b) Components of the corresponding eigenvector $\psi(a)$ computed by Eq.~(\ref{eq4_OmN}) for the different parameters $(N,\alpha)$ (same as in the left panel); all the curves match very accurately. The inset shows the original differences $\Delta u = \phi^{(N+1,\alpha)}(a)-\phi^{(N,\alpha)}(a)$ before the rescaling.} 
\label{fig5}
\end{figure}

For the analysis of the leading eigenmode it is convenient to use the Cauchy differences 
	\begin{equation}
	\label{eq4_CD}
	\Delta u = \phi^{(N+1,\alpha)}(a)-\phi^{(N,\alpha)}(a) \approx c_\alpha (\rho-1)\rho^N \psi(a),
	\end{equation}
where the last asymptotic relation follows from Eq.~(\ref{eq4_D}). We wrote the coefficient as $c_\alpha$ because it depends on the initial flow map $\phi^{(0,\alpha)}$, i.e., on the parameter $\alpha$.  The Cauchy differences (\ref{eq4_CD}) are computed numerically and shown in the inset of Fig.~\ref{fig5}(b) for the same parameters $(N,\alpha)$ as in the panel (a). Using asymptotic relation (\ref{eq4_CD}), we estimate the eigenvalue as
	\begin{equation}
	\label{eq4_OmE}
	\rho \approx \frac{\left[\phi^{(N+1,\alpha)}(a)-\phi^{(N,\alpha)}(a)\right]_n}{\left[\phi^{(N,\alpha)}(a)-\phi^{(N-1,\alpha)}(a)\right]_n},
	\end{equation}
where $[\cdot]_n$ denotes the $n$-component of the sequence in parentheses, and we choose $n = 4$. 
Evaluating the expression (\ref{eq4_OmE}) with increasing $N$, we find the eigenvalue $\rho \approx -0.42$. The absolute value $|\rho| < 1$ implies asymptotic stability of the fixed point attractor, which is in agreement with our numerical results.

We express the eigenvector from Eq.~(\ref{eq4_CD}) as
	\begin{equation}
	\label{eq4_OmN}
	\psi(a) \approx
	\frac{\phi^{(N+1,\alpha)}(a)-\phi^{(N,\alpha)}(a)}{c_\alpha(\rho-1)\rho^N}.
	\end{equation}
We can set $c_{1/4} = 1$, which is equivalent to normalizing the eigenvector.
Then, the coefficient $c_{3/4} \approx 1.61$ is estimated numerically. The sequences computed by Eq.~(\ref{eq4_OmN}) are shown in Fig.~\ref{fig5}(b) for the same parameters $(N,\alpha)$ as in the panel (a). The accurate collapse of these graphs strongly supports our conjecture that the fixed-point RG attractor can be described in terms of classical stability theory.

We tested various initial conditions and found that all the conclusions made above are verified if the initial conditions decay sufficiently fast at small scales, namely, faster than (a multiple of) the turnover time $a_n \lesssim C\tau_n$. 
\newtext{From numerical measurements we observed that $u_n(1) \propto \tau_n$ for the limiting solution,} suggesting that the scaling $\propto \tau_n$ is a common property of solutions in the inviscid limit.
When the initial conditions are ``rougher'', e.g., $a_n \propto \tau_n^h$ as $n \to \infty$ with $h < 1$, then the universality of the inviscid limit is broken: for example, the eigenmode relation (\ref{eq4_D}) is no longer satisfied when $a_n \propto \tau_n^{1/2}$. 
We emphasize that these are just numerical observations, which may be useful for developing a more rigorous RG theory in the future.
They indicate that additional regularity conditions are necessary in defining the phase space $\mathbb{U}$.

\section{Period doubling bifurcation in RG dynamics}
\label{sec5}

We now extend our analysis to a family of ideal models depending on a real parameter $p$. These models are defined by the coupling function 
	\begin{equation}
	\label{eq5_1}
	f(u,v) = \left\{ \begin{array}{ll} 
	0.3-0.1 \cos \left(p e^{-u/v} \right), & v > 0; \\
	0.2, & v = 0;
	\end{array} \right.
	\end{equation}
which reduces to the previous case (\ref{eq4_1}) for $p = 5$. As the initial state we use Eq.~(\ref{eq4_1IC}).
In this section we focus on the loss of stability of the fixed-point RG attractor with subsequent bifurcation and its consequences for the inviscid limit.

Using the method from the previous section, we computed the eigenvalue $\rho$ numerically for different values of the parameter $p \ge 5$. The results are shown in Fig.~\ref{fig6}(a) by empty squares. We see that the eigenvalue decreases and attains the critical value $\rho = -1$ at $p_{\mathrm{pd}} \approx 6.95$. 

The fixed-point RG attractor becomes unstable when the parameter passes the point with eigenvalue $\rho = -1$.
In classical bifurcation theory~\cite{guckenheimer2013nonlinear}, this type of instability leads to the period-doubling bifurcation. In Fig.~\ref{fig6}(b) we confirm numerically that this bifurcation indeed takes place for the RG dynamics. In this figure, we plot the component $u_4(1)$ of the state $u(1) = \phi^{(N,\alpha)}(a)$ as a function of the parameter $p$. The regularization parameters are taken as $(N,\alpha) = (25,1/4)$ (shown by crosses) and $(N,\alpha) = (26,1/4)$ (shown by circles). 
Let us denote by $\Delta u_4(1)$ the difference between the values of $u_4(1)$ for $N = 25$ and $26$.
The inset in Fig.~\ref{fig6}(b) shows that the squared difference $[\Delta u_4(1)]^{2}$ grows linearly after the bifurcation, as predicted by classical bifurcation theory~\cite{guckenheimer2013nonlinear}. 
The thin curve and the red dot in Fig.~\ref{fig6}(b) are interpolations of the period-doubling bifurcation expected in the limit $N \to \infty$.

\begin{figure}[tp]
\centering
\includegraphics[width=0.95\textwidth]{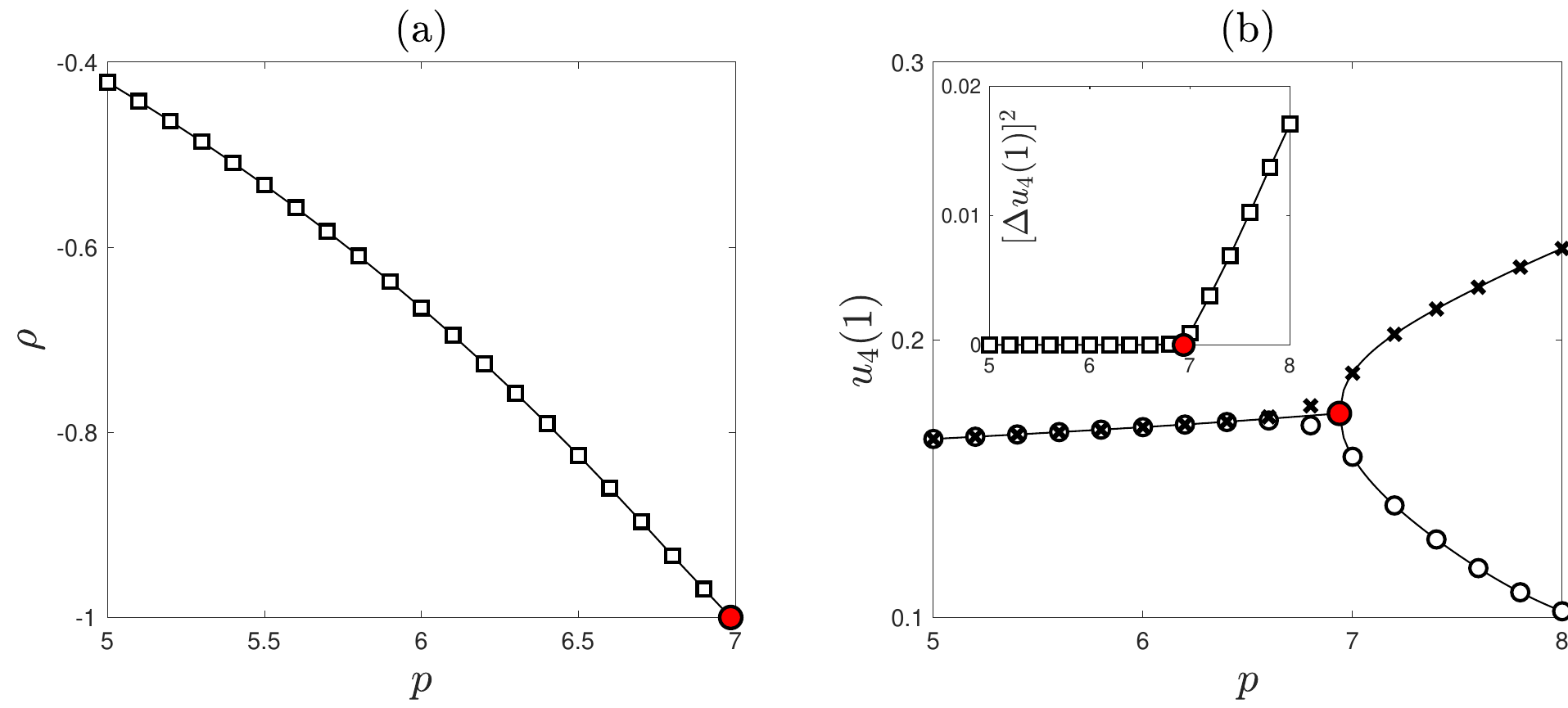}
\caption{(a) Dependence of the eigenvalue $\rho$ of the fixed-point RG attractor on the parameter $p$. In numerical calculations we used $(N,\alpha) = (25,1/4)$.
(b) Components $u_4(1)$ of the states $u(1) = \phi^{(N,\alpha)}(a)$ computed for different $p$ with the fixed $(N,\alpha) = (25,1/4)$ (shown by crosses) and $(N,\alpha) = (26,1/4)$ (shown by circles). The inset shows the squared difference $[\Delta u_4(1)]^{2}$. The red points and the thin lines show the bifurcation diagram expected in the limit $N \to \infty$.} 
\label{fig6}
\end{figure}
\begin{figure}[tp]
\centering
\includegraphics[width=0.55\textwidth]{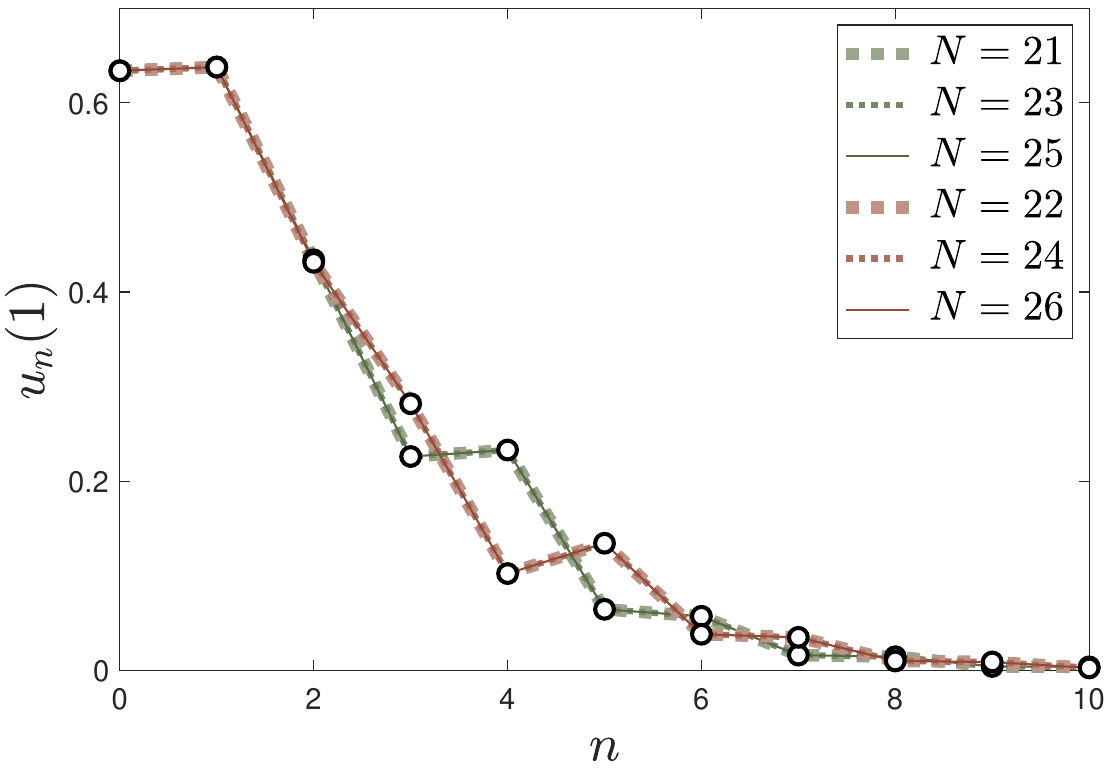}
\caption{The states $u(1) = \phi^{(N,\alpha)}(a)$ computed for $p = 8$, $\alpha = 1/4$ and different $N$. These graphs demonstrate convergence to distinct inviscid limits for odd and even $N$. The same results (not shown) are obtained for $\alpha = 3/4$.}
\label{fig7}
\end{figure}

Existence of the period-2 RG attractor at $p > p_{\mathrm{pd}}$ implies that the regularized system has two different behaviors in the inviscid limit depending on whether $N$ is even or odd; see Fig.~\ref{fig7}. 
Namely, there are two different limits 
	\begin{equation}
	\label{eq5_2}
	\phi_a^\infty = \lim_{N \to \infty}\phi^{(2N)}, \quad
	\phi_b^\infty = \lim_{N \to \infty}\phi^{(2N+1)}, \quad
	\phi^{(0)} \in \mathcal{B}(\phi_a^\infty,\phi_b^\infty),
	\end{equation}
where $\mathcal{B}(\phi_a^\infty,\phi_b^\infty)$ denotes the basin of attraction in the space of flow maps. The pair $\phi_a^\infty$ and $\phi_b^\infty$ is universal (up to a permutation) in the sense that it does not depend on regularization, provided that $\phi^{(0)}$ belongs to the basin of attraction. In other words, for large $N$ we observe one of the two solutions to the ideal initial value problem, which are defined by the flow maps $\phi_a^\infty$ and $\phi_b^\infty$ and depend only on the parity of $N$. 

\section{Regularization by noise and stochastic RG operator}
\label{sec6}

\subsection{Chaotic RG dynamics}

Let us study a different situation when the RG dynamics has no regular (fixed-point or periodic) attractor. We consider the same family of regularized systems given by Eqs.~(\ref{eq2_2}), (\ref{eq2_7}) and (\ref{eq2_6}), which depend on the regularization parameters $(N,\alpha)$. But now we use a different transfer function of the form
	\begin{equation}
	\label{eq6_1}
	f(u,v) = \left\{ \begin{array}{ll} 
	0.4-0.3 \cos \left(10.3 e^{-u/v}-5.15 \right), & v > 0; \\
	0.4-0.3 \cos 5.15, & v = 0.
	\end{array} \right.
	\end{equation}
For numerical analysis, we use the same initial state $a$ from Eq.~(\ref{eq4_1IC}).
Figure~\ref{fig8}(a) shows the states $u(1) = \phi^{(N,\alpha)}(a)$ for $\alpha = 1/4$ and different regularization scales $N$. 
This time, no convergence is observed as $N \to \infty$, suggesting that the RG dynamics may be chaotic.
In Fig.~\ref{fig8}(b) we performed a conventional chaos test: we measure how the flow maps diverge in the RG dynamics (i.e. with increasing $N$) by introducing a small  perturbation of the initial flow map at $N = 0$.
We define this perturbation as a tiny change $\delta \alpha = 10^{-15}$ of the regularization parameter. 
Here we compute the differences $\delta u = \phi^{(N,\alpha+\delta\alpha)}(a)-\phi^{(N,\alpha)}(a)$ and the graph in Fig.~\ref{fig8}(b) shows the norms $\|\delta u\| = \sqrt{\delta u_0^2+\delta u_1^2+\cdots}$. We indeed see a very rapid increase of deviations. 
However, in contrast to classical chaos, which is characterized by exponential growth of deviations, in our case the increase of $\|\delta u\|$ is superexponential. The inset in Fig.~\ref{fig8}(b) shows the graph of $\log \log (\|\delta u\|/\delta\alpha)$, suggesting double exponential behavior. \newtext{A possible explanation to this double exponential growth is that the separation $\|\delta u\| \propto \delta\alpha \,e^{\lambda_{\max}}$ at time $t = 1$ is exponential with respect to the largest Lyapunov exponent of the flow map. Assuming that $\lambda_{\max}$ is determined by the smallest active scale, the dimensional prediction yields the second exponential relation as $\lambda_{\max} \sim 1/\tau_N = 2^N$.}

\begin{figure}[tp]
\centering
\includegraphics[width=0.9\textwidth]{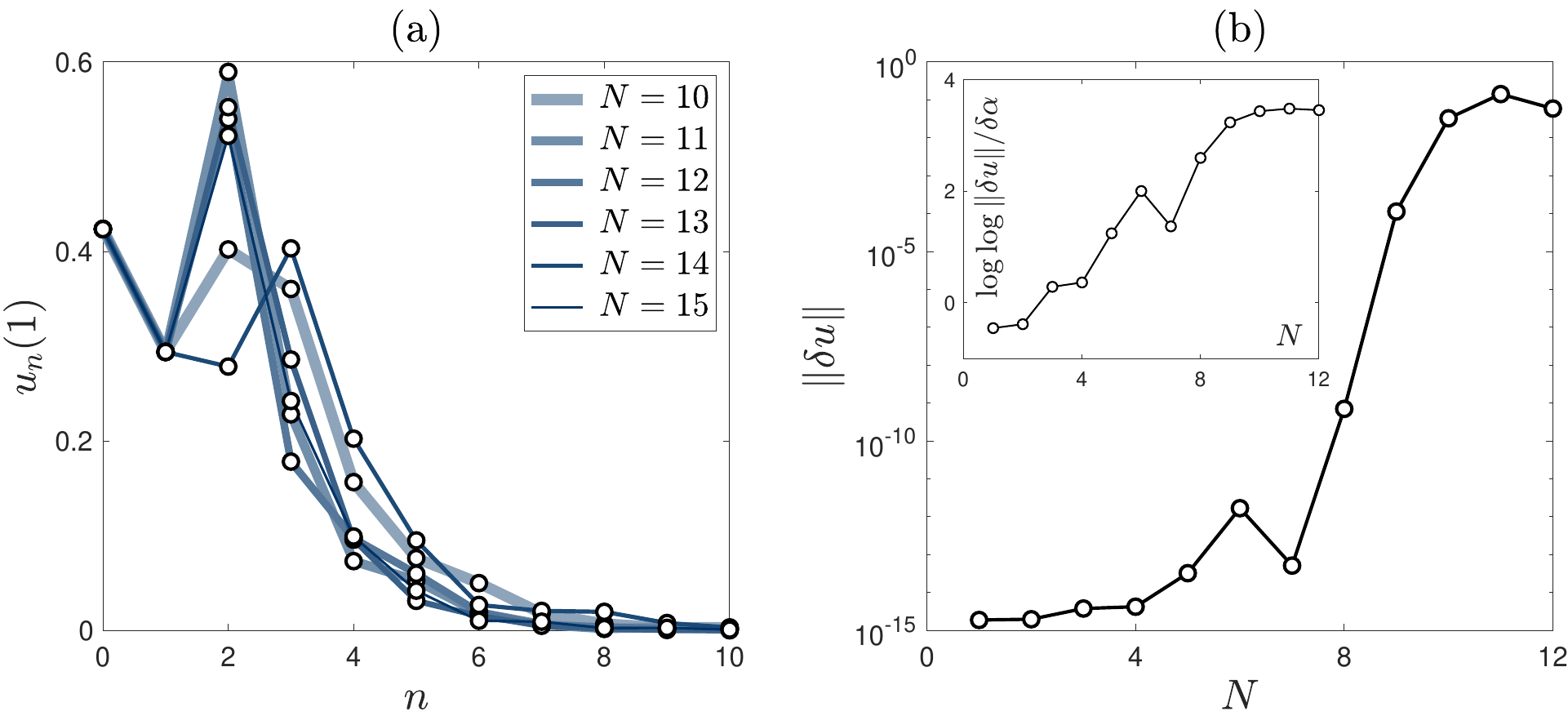}
\caption{(a) Irregular RG dynamics for the transfer function (\ref{eq6_1}). (b) Superexponential increase of deviations in the RG dynamics shown in logarithmic vertical scale. The inset plots the double logarithm $\log \log (\|\delta u\|/\delta\alpha)$.}
\label{fig8}
\end{figure}

In classical chaos (e.g. for hyperbolic systems~\cite{eckmann1985ergodic}) the distribution of a random ensemble of initial conditions evolves at large times to the probability measure of a chaotic attractor. 
Here, even a small amount of initial randomness is sufficient due to the exponential amplification of small perturbations, widely known as the butterfly effect.
Since this amplification is even stronger in our RG dynamics, we follow the same strategy: we introduce uncertainties into the RG dynamics. \newtext{We do this by adding noise to the system at the viscous scale $N$ so that the randomness disappears in the inviscid limit $N \to \infty$. Another point of view underlying the probabilistic approach~\cite{flandoli2008markov,vishik2012mathematical} is the lack of uniqueness in the ideal initial value problem.}

\subsection{Stochastic RG operator}

In hydrodynamic turbulence, a natural source of uncertainties is microscopic noise, which strongly influences the viscous scale of motion due to very large Lyapunov exponents~\cite{ruelle1979microscopic}. 
We will mimic this situation in our model by introducing noise at the viscous scale $N$.
Specifically, we define the stochastic regularization by randomizing the transfer function (\ref{eq2_7}) as
	\begin{equation}
	\label{eq6_2}
	f_n(t) = \left\{ \begin{array}{ll}
	f\big(u_n(t),u_{n+1}(t)\big), & n < N; \\
	\alpha_t, & n = N.
	\end{array}\right.
	\end{equation}
Here the values  of $\alpha_t$ at different times $t/\tau_N = 0,1,2,\ldots$ are independent and identically distributed (i.i.d.) random numbers with a given probability distribution. 
As a result, the stochastic term $(1-\alpha_t) u_N$ appears in the dynamics Eq.~(\ref{eq2_2}) at the viscous scale $N$, representing small-scale noise.

Given the initial condition $u(0) = a$, the solution $u(1)$ at the turnover time $\tau_0 = 1$ becomes a random variable. 
Then, instead of the deterministic flow map (\ref{eq3_1}), it is convenient to introduce a probability (Markov) flow kernel
	\begin{equation}
	\label{eq6_3}
	\Phi^{(N)}(A|a).
	\end{equation}
For fixed $a$, this is a probability  measure  determining the probability that the solution $u(1) \in A$ for a measurable subset $A \subset \mathbb{U}$. 
In other words, the flow kernel (\ref{eq6_3}) determines the probability of transition from the initial state $u(0) = a$ to different states $u(1)$ per unit time.
Following~\cite{mailybaev2023spontaneous}, where a similar analysis is performed for a slightly different system, we now rewrite the RG relation between flow maps from Section~\ref{sec3} as the RG relation between flow kernels.

First, let us recall operations with probability kernels (see, e.g.,~\cite{taylor2001user}), which replace respective operations (composition and addition) of deterministic maps. 
For two kernels $\Phi$ and $\Phi'$, their composition is a kernel $\Phi' \circ \Phi$ defined as 
	\begin{equation}
	\label{eqSK3}
	\Phi' \circ \Phi(A|a) 
	=  \int \Phi'(A|b)\Phi(db|a),
 	\end{equation}
where the integration is with respect to $db$. 
Relation (\ref{eqSK3}) defines the probability distribution for the composition $\phi' \circ \phi(a)$, where $\phi$ and $\phi'$ are statistically independent random maps whose distributions are given by $\Phi$ and $\Phi'$. 
Similarly, the probability distribution of the sum $\phi(a)+\phi'(a)$ is described by the convolution kernel $\Phi \ast \Phi'$ defined as
	\begin{equation}
	\label{eqSK5}
	\Phi \ast  \Phi' (A|a) 
	= \iint \chi_{A}(u'+u'') \Phi(du'|a) \Phi(du''|a),
 	\end{equation}
where the integration is with respect to $du'$ and $du''$, and $\chi_{A}(\cdot)$ is the characteristic function. 

From the derivations of Section~\ref{sec3} (see also~\cite{mailybaev2023spontaneous}) one can see that the deterministic RG relations (\ref{eq3_4}) and (\ref{eq3_5}) extend to the case of stochastic regularizations by replacing all maps with the corresponding probability kernels, and all operations between maps (composition and addition)  with their analogues for kernels  (composition and convolution). 
Then, the RG relation between the flow kernels (\ref{eq6_3}) becomes
	\begin{equation}
	\label{eq6_R1}
	\Phi^{(N+1)} = \mathfrak{R}[\Phi^{(N)}],
	\end{equation}
where we introduce the stochastic RG operator $\mathfrak{R}$ as
	\begin{equation}
	\label{eq6_R2}
	\mathfrak{R} [\Phi] = \Delta_{\pi_0-\xi} \ast \big(\Delta_{\sigma_-} \circ \Phi \circ (\Delta_\xi \ast \Phi \circ \Delta_{\sigma_+})\big).
	\end{equation}
Here $\Delta_\xi$ denoted the Dirac kernel of the deterministic map $\xi$, defined in terms of the Dirac measures as $\Delta_\xi(A|a) = \delta_{\xi(a)}(A)$. 

The expressions of Theorem~\ref{th2} are similarly extended for stochastic regularizations~\cite{mailybaev2023spontaneous}. Hence, probability distributions of stochastically regularized solutions at any time can be expressed in terms of the flow kernels $\Phi^{(N)}$.

As in the deterministic case, the stochastic RG operator (\ref{eq6_R1}) relates the change of the viscous scale $N$ in the stochastically regularized system with the RG dynamics in the space of flow kernels.
In this scenario, the choice of both regularization and noise translates into the definition of an initial flow kernel $\Phi^{(0)}$, while the transfer function $f$ of the ideal system determines the RG operator (\ref{eq6_R2}) through the map $\xi$ from  Eq.~(\ref{eq3_3}). 
The inviscid limit $N \to \infty$ is thus associated with RG attractors that depend only on the ideal system. 
The choice of regularization selects one of these attractors by setting the initial flow kernel $\Phi^{(0)}$ in the corresponding basin of attraction.

\section{Spontaneous stochasticity as RG attractor}
\label{sec7}

We now apply the RG formalism to the stochastically regularized system with the transfer functions given by Eqs.~(\ref{eq6_1}) and (\ref{eq6_2}). 
For the i.i.d. random variables (noise) $\alpha_t$ we use a uniform probability distribution in the interval $\alpha \in [0.4,0.5]$, denoted as $\mu$. 
As the second type of stochastic regularization, we use a uniform distribution in the much smaller interval $\alpha \in [0.3,0.301]$, denoted as $\tilde\mu$. 
In the second case, random fluctuations are significantly smaller than the deterministic (``viscous'') part, simulating a similar property of molecular noise in turbulence.
The initial condition $a$ is the same as before and is given by Eq.~(\ref{eq4_1IC}).
We analyze numerically the flow  kernels $\Phi^{(N)}$ by  computing the transition probabilities $a \mapsto u(1)$ using Monte Carlo sampling.  
For this purpose, we compute $10^6$ random states $u(1)$ for each viscous scale $N = 16,17,18$. 
Figure~\ref{fig9} shows the probability densities of components $u_2(1)$, $u_3(1)$ and $u_4(1)$ computed using the histogram method. 
The graphs collapse with high accuracy, confirming that the PDFs converge to nontrivial probability distributions as  $N \to \infty$,  and  these distributions  do not depend on the noise  ($\mu$ or $\tilde\mu$). 
We remark that, \newtext{according to numerical measurements,} the variables $u_0(1)$ and $u_1(1)$ turn out to be deterministic.

\begin{figure}[tp]
\centering
\includegraphics[width=0.99\textwidth]{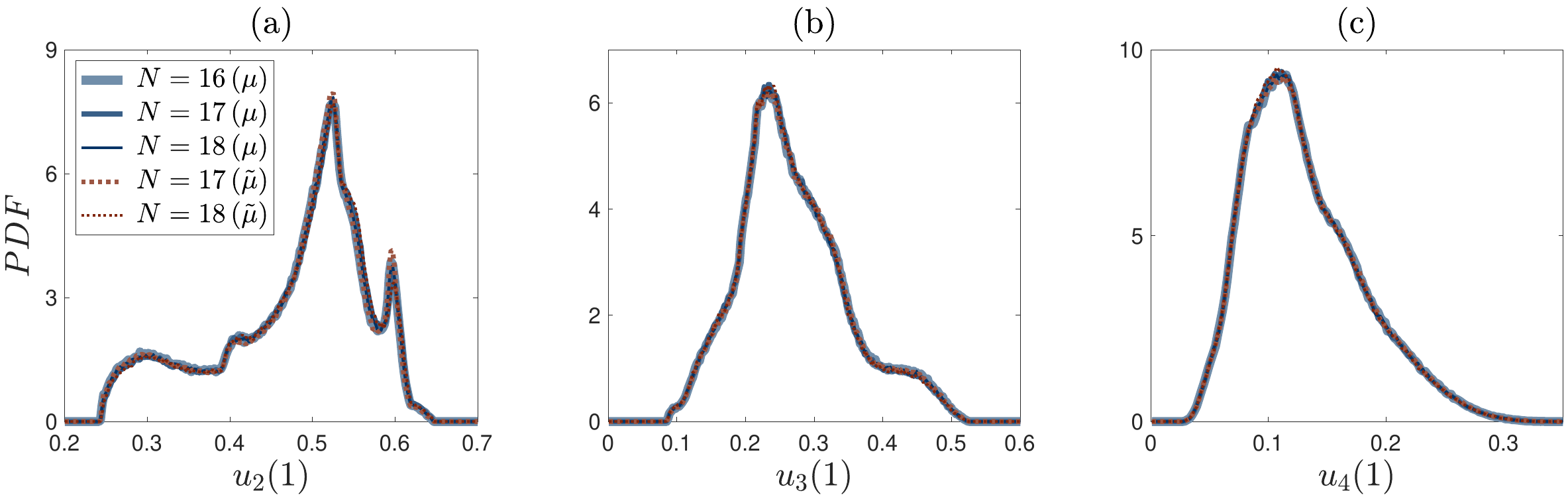}
\caption{PDFs of the state components (a) $u_2(1)$, (b) $u_3(1)$ and (c) $u_4(1)$ computed with $10^6$ random simulations. The graphs correspond to different $N = 16,17,18$ and two types of noise, $\mu$ and $\tilde\mu$. 
The accurate collapse of these graphs confirms the convergence to a fixed-point (spontaneously stochastic) attractor of the RG operator.}
\label{fig9}
\end{figure}

\subsection{Fixed-point RG attractor} 

The numerical results suggest that the RG dynamics has a fixed-point attractor in the space of flow kernels. In this case we replace the previous deterministic limit (\ref{eq4_lim}) by
	\begin{equation}
	\label{eq7_lim}
	\Phi^\infty = \lim_{N \to \infty}\Phi^{(N)} = \lim_{N \to \infty}\mathfrak{R}^N[\Phi^{(0)}], \quad
	\Phi^{(0)} \in \mathfrak{B}(\Phi^\infty),
	\end{equation}
where $\mathfrak{B}(\Phi^\infty)$ denotes a basin of attraction in the space of flow kernels. 
A natural definition~\cite{karr1975weak} in which the limit (\ref{eq7_lim}) can be understood requires that
	\begin{equation}
	\label{eqSlim}
	\int \varphi({{u}})\Phi^\infty(du|a) = 
	\lim_{N \to \infty} \int \varphi({{u}})\Phi^{(N)}(du|a^{(N)})
 	\end{equation}
for any converging sequence ${a}^{(N)} \to {a}$ of initial conditions and any bounded and uniformly continuous function $\varphi(u)$ depending on a finite number of state components $u_n$.
This definition is consistent with our numerical tests. 
Additionally, we checked numerically that the PDFs in Fig.~\ref{fig9} remain the same (up to small statistical fluctuations) if the fixed initial state is replaced by a converging sequence ${a}^{(N)} \to {a}$. 

The definition (\ref{eqSlim}) ensures that the limiting flow kernel (\ref{eq7_lim}) is a fixed point of the stochastic RG operator~\cite{mailybaev2023spontaneous}
	\begin{equation}
	\label{eqSS2b}
	\Phi^\infty = \mathfrak{R}[\Phi^\infty].
 	\end{equation}
The limiting kernel $\Phi^\infty$ corresponds to a system in which the regularization (including the noise) is removed (transferred to infinitesimal scales) as $N \to \infty$. 
In fact, using the relations of Theorem~\ref{th2} extended to the stochastic case, one can show~\cite{mailybaev2023spontaneous} that stochastically regularized solutions $\{u_n^{(N)}(t)\}$ determined by $\Phi^{(N)}$ converge weakly to a stochastic process $\{u_n^{\infty}(t)\}$ determined by $\Phi^\infty$. Furthermore, this limiting stochastic process is supported on solutions of the ideal initial value problem. 
Recall that the ideal initial value problem is formally deterministic, but may have nonunique solutions. Thus, the limit (\ref{eq7_lim}) with a nontrivial (not Dirac) flow kernel $\Phi^\infty$ imposes a random selection of these nonunique solutions.
The phenomenon just described (stochasticity that persists in the limit of vanishing noise) is known in turbulence as Eulerian spontaneous stochasticity~\cite{mailybaev2016spontaneously,thalabard2020butterfly,bandak2024spontaneous} for velocity fields and the Lagrangian spontaneous stochasticity for particle trajectories~\cite{falkovich2001particles}. 

In our RG formalism, spontaneous stochasticity is related to the fixed-point attractor $\Phi^\infty$ of the stochastic RG operator $\mathfrak{R}$. This explains the universality of spontaneously stochastic solutions: they are independent of regularization and noise, provided that the initial flow kernel $\Phi^{(0)}$ belongs to the basin of attraction.

\subsection{Local RG dynamics}

As a next step, we conjecture that the stochastic RG attractor $\Phi^\infty$ has a regular local structure. 
This means that the RG operator can be linearized about the fixed point as 
	\begin{equation}
	\label{eq7_RGL1}
	\mathfrak{R}(\Phi^\infty+\delta\Phi) 
	\approx \Phi^\infty+\delta \mathfrak{R}^\infty[\delta\Phi],
	\end{equation}
where $\delta \mathfrak{R}^\infty[\delta\Phi]$ is a variational derivative.
We further assume that the linearized RG operator $\delta \mathfrak{R}^\infty$ has a dominant (largest absolute value) eigenvalue $\rho$ and a corresponding eigenvector $\Psi$ that satisfy the eigenvalue problem
	\begin{equation}
	\label{eq7_EV}
	\delta \mathfrak{R}^\infty[\Psi] = \rho \Psi.
	\end{equation}
Here the eigenvector is a kernel $\Psi(A|a)$, which is a signed measure with  zero total mass $\Psi(\mathbb{U}|a) = 0$ for each initial state $a$.
Assuming that the eigenmode (\ref{eq7_EV}) is real, deviations of flow kernels $\Phi^{(N)} = \mathfrak{R}^N[\Phi^{(0)}]$ from the  limit  $\Phi^\infty$ take the asymptotic form
	\begin{equation}
	\label{eq7_D}
	\delta\Phi^{(N)} = \Phi^{(N)}-\Phi^\infty \approx  c \rho^N \Psi, \quad N \to \infty.
	\end{equation}
Since the eigenmode is a property of the RG operator, both $\rho$ and $\Psi$ are independent of the choice of regularization and noise.
Therefore, expression (\ref{eq7_D}) establishes the asymptotic universality of the deviations up to the constant factor $c$.  This factor is the only asymptotic property depending on the regularization.

Numerical testing of the relation (\ref{eq7_D}) is more difficult, since small differences between converging probability distributions must be measured.
For this purpose, we used the Monte Carlo method with $10^7$ simulations for $N = 14,\ldots,18$, the same initial state and the first type of noise. For numerical analysis it is convenient to use Cauchy differences
	\begin{equation}
	\label{eq7_CD}
	\Delta \Phi^{(N)} = \Phi^{(N+1)}-\Phi^{(N)} \approx \rho^N \Psi,
	\end{equation}
where the last asymptotic relation follows from Eq.~(\ref{eq7_D}); for simplicity, the constant factor $c(\rho-1)$ is absorbed by the normalization of the eigenvector $\Psi$.
Equation (\ref{eq7_CD}) yields the asymptotic expression for the eigenvector as
	\begin{equation}
	\label{eq7_CDb}
	\Psi \approx \frac{\Delta \Phi^{(N)}}{\rho^N}.
	\end{equation}
	
We access the signed measure $\Delta \Phi^{(N)}(A|a)$ for the given initial state $a$ numerically as follows. 
We compute the probability densities $p^{(N)}(x)$ and $p^{(N+1)}(x)$ of the same component $x = u_n(1)$ for the two regularization scales $N$ and $N+1$; see Fig.~\ref{fig9}. 
Their difference $\Delta p^{(N)}(x) = p^{(N+1)}(x)-p^{(N)}(x)$ represents the respective marginal density of $\Delta \Phi^{(N)}$. Using Eq.~(\ref{eq7_CD}) 
we estimate the eigenvalue $\rho \approx -0.7$ by minimizing the difference between rescaled functions $\Delta p^{(N)}(x)/\rho^N$ for different $N$. 
The resulting rescaled functions approximate the eigenvector  in Eq.~(\ref{eq7_CDb}) and are shown in Fig.~\ref{fig10}.  
A rather accurate (up to statistical fluctuations) collapse of these graphs for different $N$ verifies our conjecture that the spontaneously stochastic RG attractor has a local structure described in terms of the classical stability  theory.

\begin{figure}[tp]
\centering
\includegraphics[width=0.85\textwidth]{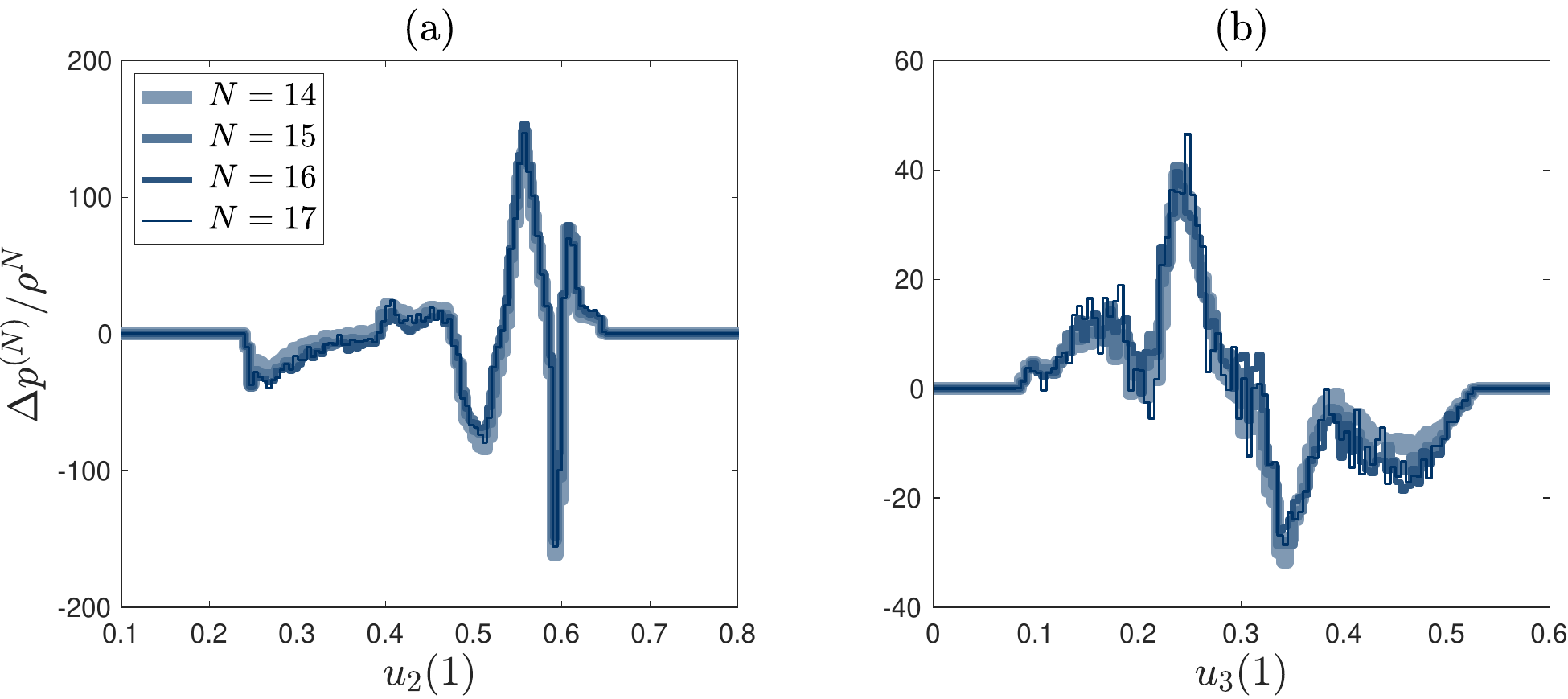}
\caption{Rescaled differences between PDFs of the state component (a) $u_2(1)$ and (b) $u_3(1)$, computed using $10^7$ random samples. The graphs correspond to $N = 14,\ldots,17$ and the first type of noise $\mu$. Their collapse verifies the asymptotic eigenmode relation (\ref{eq7_D}).}
\label{fig10}
\end{figure}

\section{Bifurcations of spontaneously stochasticity}
\label{sec8}

\newtext{The transition from deterministic to stochastic RG dynamics can be viewed as an analogue of the relation between deterministic and random dynamic systems. However, there are significant differences. In random dynamic systems, the evolution of the probability distribution is linear: recall, for example, the Fokker--Planck equation for the probability density. In contrast, the stochastic RG operator considered in our theory is nonlinear with respect to the flow kernel, as can be seen from Eq.~(\ref{eq6_R2}). Therefore, the stochastic RG operator induces nonlinear dynamics in the space of flow kernels. In this section, we demonstrate one consequence of this nonlinearity, namely, the period-doubling bifurcation of spontaneous stochasticity.}

Let us study how the fixed-point attractor of the stochastic RG operator changes and eventually bifurcates with a change of the transfer function. We consider
	\begin{equation}
	\label{eq8_1}
	f(u,v) = \left\{ \begin{array}{ll} 
	0.4-0.3 \cos \left(p e^{-u/v}-p/2 \right), & v > 0; \\
	0.4-0.3 \cos \left(p/2\right), & v = 0;
	\end{array} \right.
	\end{equation}
which yields the case (\ref{eq6_1}) studied in Sections~\ref{sec6} and \ref{sec7} for $p = 10.3$. 
The parameter $p$ will be changed in the range $10.3 \le p \le 10.8$.

For numerical simulations we use regularization (\ref{eq6_2}) with random variable $\alpha_t$ uniformly distributed in the interval $[0.4,\, 0.5]$; this is the first type of noise denoted as $\mu$. For the statistical analysis we compute $10^5$ samples of the solution $u(1)$ for the same initial condition (\ref{eq4_1IC}) and different $N$  and $p$; note that we have reduced the number of samples since we need to achieve larger $N$. Figure~\ref{fig11} shows the dependence of the expectation value $\mathbb{E}[u_2(1)]$ and the standard deviation $\sigma[u_2(1)]$ as functions of the parameter $p$ for the two sequential viscous scales, $N = 19$ (crosses) and $20$ (circles).

\begin{figure}[t]
\centering
\includegraphics[width=0.9\textwidth]{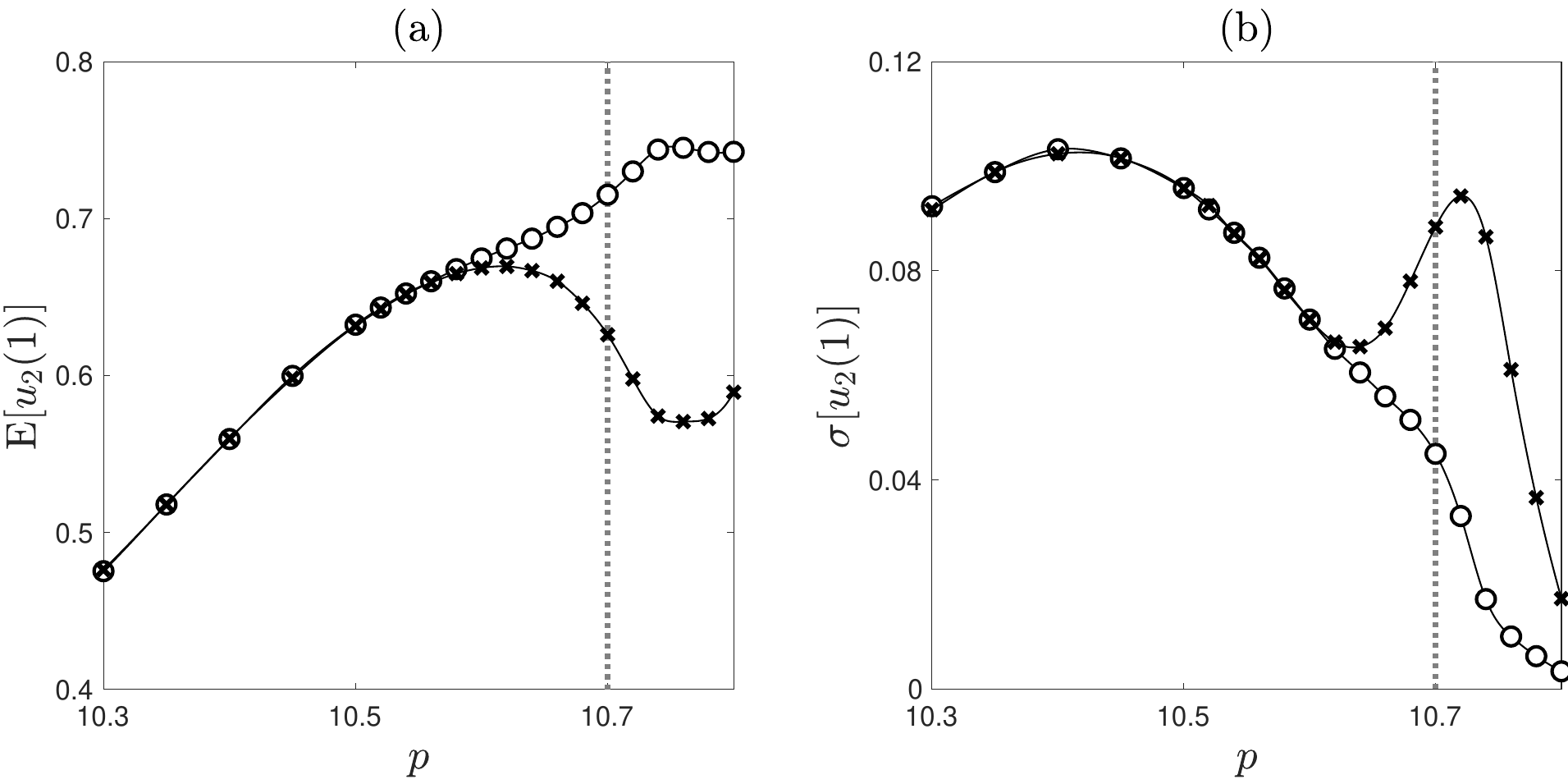}
\caption{(a) Expectation value $\mathbb{E}[u_2(1)]$ and (b) the standard deviation $\sigma[u_2(1)]$ as functions of the parameter $p$ for the two sequential viscous scales, $N = 19$ (crosses) and $20$ (circles). All markers are measurement points. 
The vertical dashed line corresponds to $p = 10.7$; see Fig.~\ref{fig12}.
The period-doubling bifurcation can be recognized at $p \approx 10.65$.}
\label{fig11}
\end{figure}

We observe that the spontaneously stochastic RG attractor bifurcates around $p \approx 10.65$. 
For larger $p$ the attractor becomes periodic with period $2$, which means that we are dealing with a period-doubling bifurcation.
We verify the periodicity of the RG attractor in Fig.~\ref{fig12}, where we plot the PDFs of the components $u_2(1)$, $u_3(1)$ and $u_4(1)$ for $p = 10.7$ and different $N$. 
The plots confirm that the regularized stochastic solutions converge in the inviscid limit for even and odd subsequences, $N = 2M \to \infty$ and $N = 2M+1 \to \infty$, and the limits are given by two different probability distributions. 
For flow kernels, this means
	\begin{equation}
	\label{eq8_2}
	\Phi_a^\infty = \lim_{M \to \infty}\Phi^{(2M)}, \quad
	\Phi_b^\infty = \lim_{M \to \infty}\Phi^{(2M+1)}, \quad
	\Phi^{(0)} \in \mathcal{B}(\Phi_a^\infty,\Phi_b^\infty),
	\end{equation}
in analogy with the deterministic case (\ref{eq5_2}). Note that the two limits $\Phi_a^\infty$ and $\Phi_b^\infty$ in Eq.~(\ref{eq8_2}) are defined up to a permutation, depending on the initial flow kernel $\Phi^{(0)}$. For example, these limits are swapped for the second type of noise denoted by $\tilde\mu$, for which $\alpha_t$ is uniformly distributed in the interval $[0.3,\,0.301]$; see the dotted lines in Fig.~\ref{fig12}. 
We do not have enough statistics to reliably measure the RG eigenvalue as a function of $p$. 
Recall, however, that $\rho \approx -0.7$ at $p = 10.3$, which is close to the value $-1$ required for the period doubling~\cite{guckenheimer2013nonlinear}. 

\begin{figure}[tp]
\centering
\includegraphics[width=0.99\textwidth]{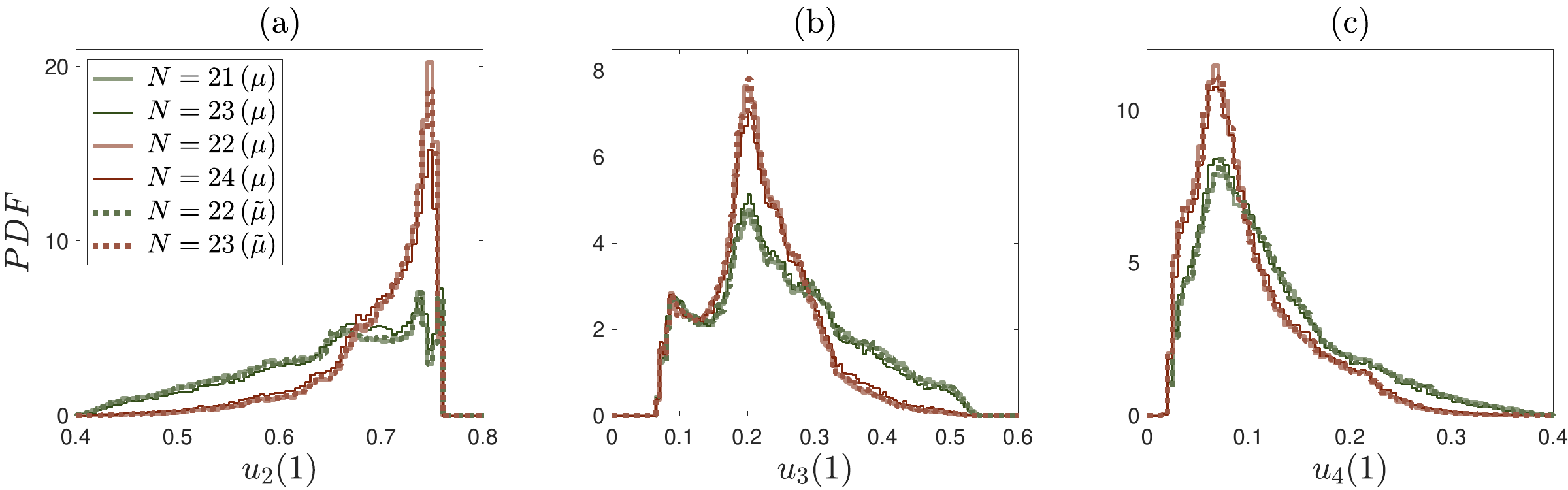}
\caption{PDFs of the state components (a) $u_2(1)$, (b) $u_3(1)$ and (c) $u_4(1)$ computed with $10^5$ simulations for $p = 10.7$. The graphs correspond to different $N = 21,\ldots,24$ and two types of noise, $\mu$ and $\tilde\mu$. For each type of noise, PDFs converge to two different statistics (green and red) depending on the parity of $N$.}
\label{fig12}
\end{figure}

Our findings suggest that the existence and universality of spontaneously stochastic solutions are explained by the existence of stochastic RG attractors. The properties of these attractors are described using classical stability and bifurcation theories. In particular, our results show that there can be different types of spontaneous stochasticity depending on the type of the corresponding RG attractor.

\section{Conclusion}
\label{sec10}

We designed a class of models for studying the limit of vanishing regularization and noise (inviscid limit) in multiscale systems. For these models we presented further development of the RG approach, originally introduced in~\cite{mailybaev2023spontaneous}, which relates the inviscid limit to the dynamics of the RG operator. One can draw two types of conclusions from our analysis. First, the RG approach explains known and reveals new universal properties of solutions obtained in the inviscid limit. 
Here, universality (independence from regularization and noise) follows from the correspondence of the inviscid limit to the RG attractor.
In particular, this explains the existence and universality of spontaneously stochastic solutions, i.e., solutions that remain stochastic in the limit of vanishing noise. A new property is the asymptotic universality of the first-order terms (deviations) obtained from the leading RG eigenmode of the linearized RG operator. 

The second type of conclusions relates to bifurcations of RG attractors with a change of the ideal system.
Classical bifurcation theory applies here, as shown by several examples of period-doubling bifurcations.
This and potentially other types of bifurcations point to another interesting property of the inviscid limit: the variety of forms it can take and the connections between them.
In particular, we observed a period-doubling RG bifurcation for spontaneously stochastic solutions. 
It leads to two different (but still universal) statistical behaviors for the inviscid limits coexisting in the same system.

Our models are discrete in both space and time. 
Compared to previous studies~\cite{mailybaev2023spontaneously,mailybaev2023spontaneous}, we introduced an energy balance to the equations of motion that reproduces some distinctive properties of turbulence, such as the energy cascade and intermittency (see the Appendix). 
Thus, our models can serve as prototypes for the limits of vanishing viscosity and noise in the Navier--Stokes and Burgers systems, among many others. 
Let us emphasize that the fractal lattice model has significant advantages.
From a theoretical point of view, it allows an explicit formulation of the RG operator. 
From a numerical point of view, it significantly increases the speed and accuracy of simulations.
Note that some of the RG constructions of this paper have already been generalized and verified on shell models of turbulence in which time is continuous~\cite{mailybaev2024rg}. 

The developed RG approach suggests several directions for future research. 
Already in a few simple examples of this paper we observed a wide range of scenarios such as fixed-point, periodic and chaotic RG attractors. 
One can further investigate the RG dynamics in its relation to the classical theory of dynamical systems. 
This study can be carried out both numerically and analytically, since simple solvable systems with nontrivial properties can be constructed on the fractal lattice~\cite{mailybaev2023spontaneously,mailybaev2023spontaneous}. 
Another promising direction is the generalization and analysis of the RG features in physical models.
Understanding the underlying RG dynamics can greatly facilitate this research by providing a variety of feasible tests.

\section{Appendix: Energy cascade and intermittency}
\label{sec11}

In this section we describe the statistical steady state in a forced model. The forcing (energy input) is applied at the largest scale of our model by modifying Eq.~(\ref{eq2_4})  as
	\begin{equation}
	\label{eqA_1}
	u_0(t+1) = \big[1-f_0(t)\big] u_0(t)+1, \quad t = 0,1,\ldots
	\end{equation}
Here unit energy is added at each large-scale turnover time $\tau_0 = 1$.
At smaller scales ($n \ge 1$), let us consider the regularized model given by Eqs.~(\ref{eq2_2}), (\ref{eq2_7}), (\ref{eq2_6}) and  (\ref{eq6_1}). We introduce the structure functions of order $p$ as
	\begin{equation}
	\label{eqA_2}
	S_p(\ell_n) = \left\langle \left[u_n(t)\right]^p \right\rangle 
	= \lim_{M \to \infty} \frac{1}{M}\sum_{m = 0}^{M-1} \left[u_n(m\tau_n)\right]^p,
	\end{equation}
where averaging at each scale is performed over discrete times $t = m\tau_n$ of the lattice.
The numerical calculation of these structure functions was carried out for $N = 17$ and $\alpha = 1/4$ starting from the initial condition (\ref{eq4_1IC}). We ignored the initial (transient) time interval of length $100$, and then averaged over a time interval of length $4 \times 10^4$. The averages converged accurately and the results are shown in Fig.~\ref{figApp1}(a) in log-log scale.

\begin{figure}[tp]
\centering
\includegraphics[width=0.9\textwidth]{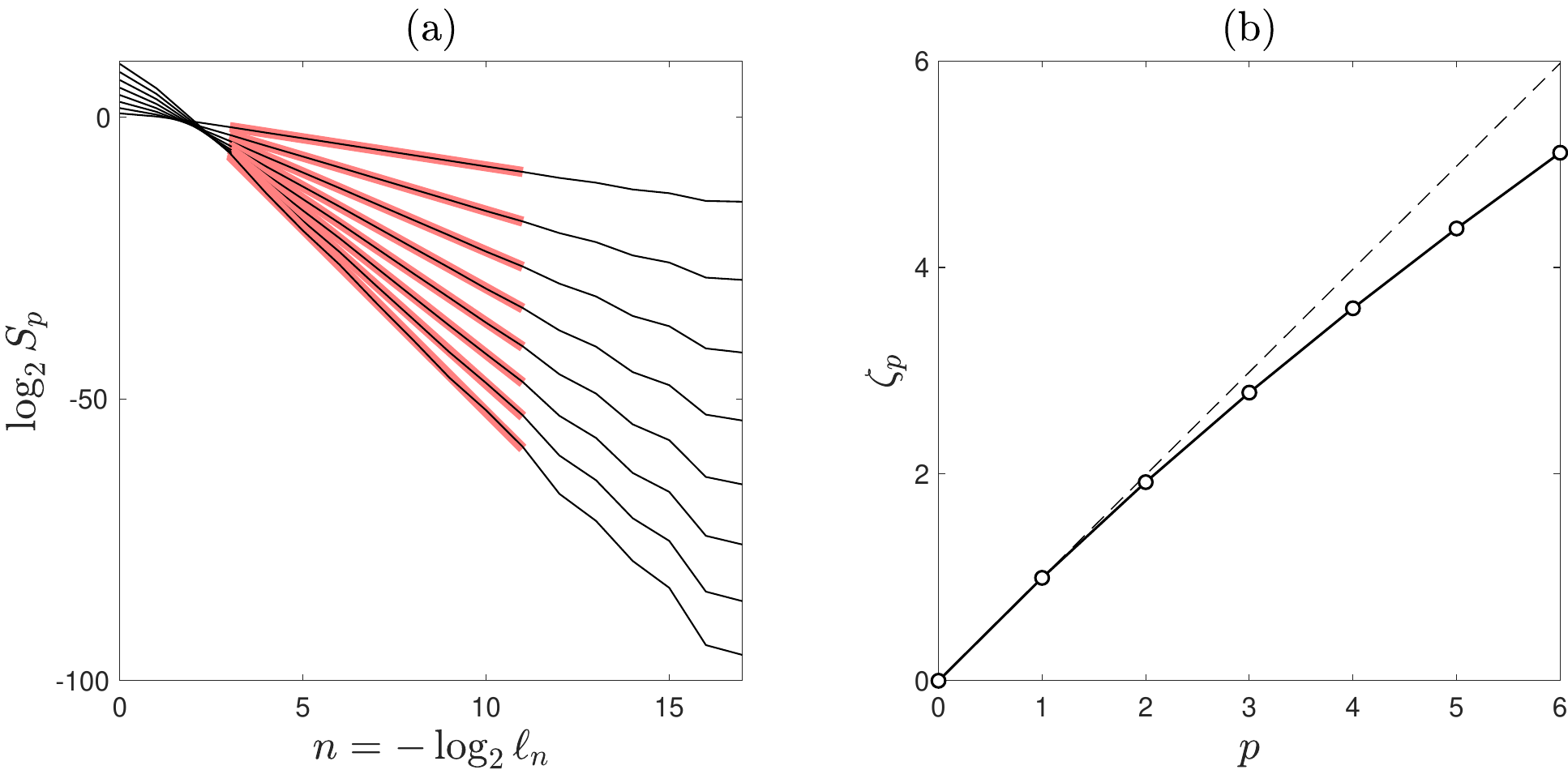}
\caption{(a) Structure functions for $p = 1,\ldots,8$. (b) Anomalous scaling exponents.}
\label{figApp1}
\end{figure}

Recall that $u_n(t)$ is interpreted as the energy at scale $n$ and time $t$. Hence, the first-oder structure function $S_1(\ell_n)$ represents the time-averaged energy at each scale. The existence of the statistical stationary state implies that all the injected energy, which has unit rate according to Eq.~(\ref{eqA_1}), is transported to the viscous scale $N$, where it is eventually dissipated by the regularization mechanism. 
In turbulence, this phenomenon is called the energy cascade from large to small scales~\cite{frisch1999turbulence,eyink2006onsager,eyink2024onsager}.

The structure functions in Fig.~\ref{figApp1}(a) have a power-law dependence
	\begin{equation}
	\label{eqA_3}
	S_p(\ell_n) \propto \ell_n^{\zeta_p}
	\end{equation}
in the inertial interval, as shown by the thick red lines. Here, the inertial interval represents a range of scales $\ell_n$ that are much smaller than the forcing scale $\ell_0$ and much larger than the viscous scale $\ell_N$. The exponents $\zeta_p$ presented in Fig.~\ref{figApp1}(b) depend nonlinearly on the order $p$; for a reference, we plotted a dashed straight line through the origin and the point $(1,\zeta_1)$. This nonlinear dependence of the (anomalous) exponents $\zeta_p$ is the well-known signature of small-scale intermittency~\cite{frisch1999turbulence}. 

Finally, we recall that the energy transferred from scale $\ell_n$ to $\ell_{n+1}$ in one turnover time $\tau_n = 2^{-n}$ is equal to $f_n(t)u_n$. Since $f_n(t) \sim 1$ by order of magnitude, we estimate the energy flux through the scale $\ell_n$ to be of order $u_n/\tau_n$. Therefore, the condition of constant mean energy flux yields $\langle u_n \rangle \propto \tau_n = \ell_n^{\zeta_1}$ with $\zeta_1 = 1$. This result agrees very accurately with numerical simulations and represents an analogue of the exponent $\zeta_3 = 1$ in three-dimensional Navier-Stokes turbulence.

\vspace{2mm}\noindent\textbf{Acknowledgments.} 
This work was supported by CNPq grant 308721/2021-7, FAPERJ grant E-26/201.054/2022 and CAPES grant AMSUD3169225P. 

\vspace{2mm}\noindent\textbf{Data Accessibility.} 
Data available on reasonable request.

\vspace{2mm}\noindent\textbf{Conflict of interest.}
The author declares that he has no potential conflict of interest related to this work.

\bibliography{refs}

\end{document}